\documentclass[runningheads]{llncs}

\usepackage[T1]{fontenc}
\usepackage{parskip}
\usepackage{amsmath}
\usepackage{amssymb}
\usepackage{mathtools}
\usepackage{thmtools}
\usepackage{bm}
\usepackage{dsfont}
\usepackage[mathscr]{euscript}
\usepackage{amsfonts}

\usepackage[pdfpagelabels,bookmarks=false]{hyperref}
\usepackage[capitalize, nameinlink]{cleveref}
\usepackage{enumitem}

\usepackage{float}
\usepackage{caption}
\usepackage{subcaption}

\usepackage{xurl} % Needed for correctly breaking bibliography links
\usepackage[backend=bibtex, maxbibnames=10, maxcitenames=10]{biblatex}
\usepackage{tikz}
\usetikzlibrary{positioning}
\usetikzlibrary{calc}
\usetikzlibrary{decorations.markings}
\usetikzlibrary{decorations.pathreplacing}
\usetikzlibrary{arrows.meta}
\usetikzlibrary{shapes.geometric} 
\usepackage[dvipsnames]{xcolor} 
\usepackage{pgfplots}        
\usepgfplotslibrary{fillbetween}

\pgfplotsset{compat=1.18}

\usepackage[linesnumbered,ruled,vlined]{algorithm2e}
\SetKwInput{KwInput}{Input}
\SetKwInput{KwOutput}{Output}
\DontPrintSemicolon

\foreach \x in {A,...,Z}{
	\expandafter\xdef\csname b\x\endcsname{\noexpand\mathbb{\x}}
	\expandafter\xdef\csname c\x\endcsname{\noexpand\mathcal{\x}}
    \expandafter\xdef\csname s\x\endcsname{\noexpand\mathcal{\x}}
	\expandafter\xdef\csname f\x\endcsname{\noexpand\mathfrak{\x}}
}

\newcommand{\stackalign}[2]{\stackrel{\mathclap{#1}}{#2}}
\renewcommand{\bar}[1]{\overline{#1}}
\renewcommand{\hat}[1]{\widehat{#1}}

\newcommand{\ceil}[1]{\lceil #1 \rceil}
\newcommand{\abs}[1]{\left\lvert#1\right\rvert}
\newcommand{\norm}[1]{\left\lVert#1\right\rVert}

\newcommand{\lS}{\boldsymbol{\mathscr{S}}}
\newcommand{\lT}{\boldsymbol{\mathscr{T}}}

\newcommand{\Opt}{\mathrm{OPT}}
\newcommand{\single}{\mathrm{single}}
\newcommand{\double}{\mathrm{double}}
\newcommand{\epsbucket}{\epsilon_b}

\newcommand{\sigmaeta}{\sigma_\eta}
\newcommand{\sigmathird}{\sigma_{1/3}}
\newcommand{\sigmageneric}{\sigma_\delta}
\newcommand{\cgeneric}{c_\delta}

\newcommand{\remainingtours}{\lT^*_\mathrm{left}}
\newcommand{\discardedtours}{\lT^*_\mathrm{dis}}
\newenvironment{proofof}[1]
{\par\noindent\textit{Proof of #1.}\ }
{\hfill\qed\par}

\title{Improved Approximation for Unsplittable CVRP via a Greedy Approach}
\author{Daniel Ebert and Leonard Weismantel}
\institute{Research Institute for Discrete Mathematics, Bonn, Germany \\
    \email{\{ebert,weismantel\}@dm.uni-bonn.de}}
\date{}

\begin{document}

\maketitle

\begin{abstract}
     We devise a polynomial-time $3.159$--approximation algorithm for the metric unsplittable Capacitated Vehicle Routing Problem. We build on the Relative Greedy Algorithm suggested by Traub \cite{traub2025approximation}, which can be considered as a variant of the LP rounding algorithm of \textcite{FMRS25}. Our main ingredient is the Average Greedy Algorithm, a new algorithm that controls both tour costs and the coverage of clients with high demand. This additional control enables a sharper averaging argument for the cost of subsequent greedy choices. Similarly to \textcite{ZX26}, combining the Average Greedy with variants of tour partitioning and a matching algorithm yields the final approximation guarantee.
\end{abstract}

\section{Introduction}\label{sec:introduction}
The Capacitated Vehicle Routing Problem (CVRP), introduced by \textcite{DR59} in 1959, is one of the fundamental problems in combinatorial optimization, with obvious applications in logistics. In the metric unsplittable variant considered in this paper, we are given a set $V$ of clients, a depot $s$, edge costs $c\colon (V\cup \{s\}) \times (V\cup\{s\}) \to \mathbb R_{\geq 0}$ that satisfy the triangle inequality and demands $d: V \to [0,1]$. A feasible solution is a collection of tours, each starting and ending at the depot, such that every client is served by exactly one tour and the total demand served by each tour is at most one. The objective is to minimize the total cost of these tours.

CVRP is NP-hard, generalizing the metric Traveling Salesman Problem and Bin Packing. Building on a TSP tour partitioning approach of \textcite{HR85}, \textcite{AG87} obtained an $(\alpha +2)$-approximation algorithm, where $\alpha$ denotes the approximation ratio for TSP. Taking $\alpha = 1.5$ yields a $3.5$-approximation algorithm. Only several decades afterward,  \textcite{BTV23} improved this by a small positive $\epsilon$. Later, \textcite{FMRS25} obtained a guarantee of $\alpha + \frac{1}{1-\delta} + \ln(2)$ for any $\delta \in (0, 1/2)$ which gives a $3.194$-approximation for small enough $\delta$. Their approach uses a modified tour splitting procedure that reduces the cost for clients with demand at most $\delta$. Clients with demand at least $\delta$ can be partially served by sampling tours from a configuration LP. More recently, \textcite{ZX26} further improved this to below $3.176$ by combining variants of the LP rounding approach with different tour partitioning procedures and a matching algorithm.

In this paper, we improve the approximation ratio to below $\alpha + 1.659$ which gives $3.159$ in the symmetric case. All of our approaches work for symmetric and asymmetric cost functions. Our improvement is based on an idea by Traub \cite{traub2025approximation}, which we sketch here: Instead of sampling from a configuration LP, one can greedily select tours, minimizing $c(T) / \sigma(T)$. Here, $\sigma$ denotes a cost bound for the additional cost of serving the clients in $T$ in the tour splitting procedure. After having selected tours $S_1,\ldots, S_t$, restricting the tours in an optimum solution to the remaining clients and averaging shows
\begin{equation}\label{eq:relative_greedy_cost_density}
    \min_{T \subseteq V\setminus \bigcup S_i} \frac{c(T)}{\sigma(T)}\leq \frac{\Opt}{\sigma(V) - \sum_{i = 1}^t\sigma(S_i)}\;.
\end{equation}
From this, one can recover the approximation ratio of $\alpha + \frac{1}{1-\delta} + \ln(2)$ achieved by \textcite{FMRS25}; see 
\cref{theorem:traubvygen} below.

In order to improve the guarantee, we note that \eqref{eq:relative_greedy_cost_density} is not always tight. Whenever we select a tour $S_t$ containing a client $v$ with large demand, the quality of the tours in the optimum solution does not degrade uniformly: Restricting to the remaining clients, the tour $T^*_v$ of the optimum solution containing $v$ loses a large fraction of its demand and therefore becomes less efficient. More precisely, $\sigma(T^*_v)$ decreases by at least $\sigma(v)$, while the total $\sigma$-value of \emph{all} other tours in the optimum solution decreases by at most $\sigma(S_t \setminus \{v\})$. Therefore, we can obtain a better guarantee by removing $T^*_v$ from the averaging argument above:
\begin{equation}\label{eq:idea-remove-single-tour-from-average}
\min_{T \subseteq V\setminus \bigcup S_i} \frac{c(T)}{\sigma(T)} \leq \frac{\Opt - c(T_v^*)}{\sigma(V) - \sum_{i = 1}^t\sigma(S_i) - \sigma(T^*_v\setminus \{v\})}\;.
\end{equation}
In \cref{sec:cost-densities}, we will apply this idea to obtain a better bound on the cost of the selected tours by excluding certain subsets of the optimum solution from the average.

This improves the approximation ratio only if we can ensure that the tours selected by the algorithm contain sufficiently many clients with large demand. In \cref{sec:greedy}, we introduce a new greedy algorithm which guarantees, up to an arbitrarily small error, the same cost bound as the algorithm by Traub, but also provides lower bounds on the amount of selected clients with large demand. We combine this greedy algorithm with two different versions of the tour partitioning routine, as well as a matching approach (\cref{sec:matching}), similar to \textcite{ZX26}. Taking the best of the three solutions yields the approximation guarantee, which we compute in \cref{sec:compute-apx-ratio}. The technical computations in \cref{sec:compute-apx-ratio} were done using AI assistance.
We state our main result as follows.
\begin{theorem}\label{thm:main_theorem}
    There exists a polynomial-time $(\alpha + 1.659)$--approximation algorithm for CVRP, where $\alpha$ denotes the approximation ratio for (symmetric or asymmetric) TSP.
\end{theorem}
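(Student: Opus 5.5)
The proof assembles three algorithms and outputs the cheapest of their solutions. Fix a threshold $\delta\in(0,1/2)$ and the cost function $\sigma$ from the tour-splitting analysis of \textcite{FMRS25}: clients with demand below $\delta$ are "small", and their cost bound is charged through the modified tour partitioning. The algorithm runs three procedures.

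\textbf{Algorithm 1, Average Greedy plus tour partitioning.} Greedily select tours of small $c(T)/\sigma(T)$ until a prescribed amount of $\sigma$-mass is covered. Then serve the remaining clients by partitioning an $\alpha$-approximate TSP tour.

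\textbf{Algorithm 2, a second tour-partitioning variant.} This is the plain $(\alpha+2)$-type splitting, refined for instances in which the large clients carry little cost.

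\textbf{Algorithm 3, matching.} Pair up clients with demand above $1/2$, and clients with demand above $1/3$, via a minimum-cost matching. This is cheap when most of $\Opt$ lies on tours with one or two big clients.

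The first step is the greedy analysis. Using the sharpened averaging inequality \eqref{eq:idea-remove-single-tour-from-average}, I would bound the cost of the $t$-th greedy tour. The key point is that whenever a selected tour covers a large client $v$, the optimum tour $T^*_v$ may be dropped from the average. I would parametrize the instance by a few quantities:
\begin{itemize}
\item the fraction of $\Opt$ spent on tours containing one large client, and on tours containing two;
\item the fraction of $\sigma(V)$ held by large clients.
\end{itemize}
The Average Greedy must also certify a lower bound on how much large-client $\sigma$-mass it has covered after any prefix. With both facts, integrating the density bound over the covered $\sigma$-mass gives a cost bound of the form $\int_0^{x}\frac{\Opt-\text{(removed)}}{\sigma(V)-s-\text{(removed)}}\,ds$. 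This replaces the $\ln 2$ term of \cref{theorem:traubvygen}.

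The second step is to bound Algorithm 2 and the matching algorithm in terms of the same parameters. Tour partitioning costs at most $\alpha\Opt$ plus the radial terms $2\sum_v d(v)c(s,v)$. I would bound those radial terms by the parts of $\Opt$ on which the large clients sit. For the matching, I would bound its cost by $\Opt$ restricted to the tours with one or two big clients, plus shortcut arguments. This is exactly where the parametrization matters: the matching is good precisely when the greedy's sharpened bound gains little.

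The third step is to take the minimum, or better a convex combination, of the three guarantees and maximize over the instance parameters. This should yield $\alpha+1.659$ for a suitable choice of $\delta$ and of the greedy stopping point, and it is essentially a numerical optimization.

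I expect the main obstacle to be the first step. The Average Greedy has to achieve, up to $\epsilon$, Traub's density guarantee and the large-client coverage guarantee at the same time. A naive greedy may prefer efficient tours made only of small clients, and then \eqref{eq:idea-remove-single-tour-from-average} never helps. I would first try picking tours minimizing a combined objective, $c(T)$ divided by a weighted sum of $\sigma(T)$ and the large-client mass, interpolating the weight adaptively. An averaging argument over optimal tours would then show that one tour satisfies both bounds in an averaged sense. Polynomial time would follow by guessing $O(1/\epsilon)$ large clients and approximating small-client densities, as in \textcite{FMRS25}.
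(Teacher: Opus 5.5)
There is a genuine gap, in two places.

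\textbf{The Average Greedy is not designed, and its coverage target is mis-specified.} The step you flag as the main obstacle is where the whole improvement lives, and your sketch would not work as stated.

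First, the guarantee cannot be about ``large-client $\sigma$-mass''. To use \eqref{eq:idea-remove-single-tour-from-average} at step $t+1$, the averaging family must be allowed to omit every optimum tour that contains a selected client of $V_{1/3}^1$. An unselected partner of a selected double client lies in such an omitted tour. So the averaging that yields the sharpened density cannot also certify coverage of double clients.

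The certifiable target is the mass of \emph{single} clients, and it must be resolved by demand, because $\Phi^{(1)}_\delta$ depends on $R_0(B)$ and $R_1(B)$ separately. The paper tracks the bucket masses $r_b$. Since $V_\single$ is defined through $\lT^*$, it guesses the targets $\tilde r_b(V_\single)$ (constantly many guess vectors) rather than guessing clients.

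Second, a single adaptively weighted ratio $c(T)/(\sigma(T)+wL(T))$, with $L$ the large-client mass, controls one aggregate. You need to keep $1+\ceil{1/\epsbucket}$ quantities on track, and how to adapt the weight is exactly the part left open. The paper's device is the deviation vector $u^\phi(T)$ with the one-sided discrepancy $\bar D(t)$: take the least $\phi_{t+1}$ such that some tour has $\langle\bar D(t),u^{\phi_{t+1}}(T)\rangle\le 0$.
\begin{itemize}
\item Summing $u^{\phi^*}$ over $\remainingtours[A_t]$ shows that $\phi_{t+1}$ is at most the truncated density of \emph{every} admissible subfamily of $\lT^*$ (\cref{lemma:greedy-rule}). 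This is what lets an algorithm that does not know $\lT^*$ profit from discarding tours.
\item The non-positive inner product gives $\norm{\bar D(t)}_2^2\le 25\sum_i\sigmageneric(S_i)^2$. This is small only after the copying reduction enforcing $R_1(T)\le\epsilon R_1(V_\eta^1)$ for every tour on $V_\eta^1$.
\item With $\epsbucket=\epsilon^{1/3}$ this yields $\norm{\bar D}_1=O(\epsilon^{1/3})\sigmageneric(V_\eta^1)$.
\end{itemize}
None of these ingredients is in your plan.

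\textbf{The algorithm portfolio cannot reach the constant.} A pure tour-partitioning run has bound $\alpha\Opt+\sigmageneric(V)$, which is never better than the Average Greedy bound for the same potential, since the greedy integrand is at most $1$. So your Algorithm~2 adds nothing. The paper instead runs the Average Greedy twice, once with $\sigmaeta$ and once with $\sigmathird$ (the latter also using $\Phi^{(2)}_{1/3}$).

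Both runs are needed. Take $\Opt=1$ and $\bigl(R_1(V_0^\eta),R_1(V_\eta^{1/3}),R_1(V_\single),R_1(V_\double),R_0(V_\single),R_0(V_\double)\bigr)=(0.154,0.289,0.437,0.120,0.875,0.244)$, which satisfies \eqref{eq:cvrp-instance-constraints}. There $C_M=\alpha+1.660$, $\widehat C_\eta\approx\alpha+1.660$ and $\alpha+\sigmathird(V)=\alpha+1.776$. Only $\widehat C_{1/3}\approx\alpha+1.645$ falls below $\alpha+1.659$.

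Your matching also needs the right objective. It must maximize the savings $\sigmathird(v)+\sigmathird(w)-c_{\{v,w\}}$ over compatible pairs in $V_{1/3}^1$, serve everything else by $\tfrac13$-ITP, and be compared against the matching formed by the double pairs of $\lT^*$. A minimum-cost matching cannot weigh a pair against leaving both clients to tour partitioning.

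Finally, the last step is not a generic numerical optimization. The paper uses monotonicity of the bounds to reduce the six parameters to a three-dimensional prism and certifies the maximum there by interval arithmetic.
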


\subsection{Preliminaries}
\label{subsec:notation-lower-bounds}
For a tour $T$, denote its cost by $c(T)$. We also use $T$ to directly refer to the client set of the tour. For $W\subseteq V$, let $T[W]$ be the tour obtained by short-cutting $T$ to $T\cap W$. All of these definitions are extended in the obvious way to any family $\lT$ of tours. For $v \in V$, define
\[R_0(v) \coloneq c(s,v) + c(v,s), \quad R_1(v) \coloneq \min\{d(v), 0.5\}\cdot R_0(v),\]
extended to vertex sets $W \subseteq V$ via $R_{\{0,1\}}(W) \coloneq \sum_{v \in W}R_{\{0,1\}}(v)$.
Thus, $R_0(v)$ is the length of the singleton tour serving $v$, whereas $R_1(v)$ is its demand-weighted radial cost. To simplify notation, we assume $R_0(v)>0$ for any client $v \in V$ by serving any client violating this with a trivial singleton tour. As observed by \textcite{HR85},
\[R_1(V) \leq \Opt\;.\]
For $0 < l\leq u$, write
\[
    V_l^u:=\{v\in V \mid l<d(v)\leq u\}
\]
and $V_0^u=\{v:d(v)\leq u\}$. 

Throughout this paper, fix an optimal solution $\lT^*$.  A client in
$V_{1/3}^1$ is called \emph{single} if its tour in $\lT^*$
contains exactly one client from $V_{1/3}^1$, and \emph{double} if that
tour contains exactly two. Denote by $V_{\single}$ and $V_{\double}$ the
corresponding client sets and by $\lT^*_\single$ and $\lT^*_\double$ the tours in $\lT^*$ containing a single and double client, respectively. Note that $V_{1/3}^1 = V_\single\;\dot\cup\;V_\double$ and
\begin{equation}\label{eq:single-double-radial}
    R_0(V_{\single})+\frac12R_0(V_{\double})\leq\Opt\;.
\end{equation}
Our algorithms are based on the following variant of the iterated tour partitioning algorithm, which in the following is denoted by $\delta$-ITP. Recall that $\alpha$ denotes the approximation ratio for (symmetric or asymmetric) TSP.
\begin{lemma}[\textcite{FMRS25}]
\label{lem:delta_tank}
For $\delta \in (0,\tfrac12)$, define the \emph{potential} of a client $v \in V$ as
\[
    \sigma_\delta(v) \coloneq
    \begin{cases}
        \displaystyle\tfrac{d(v)}{1-\delta}\cdot R_0(v), \quad&v\in V_0^\delta,\\[0.6em]
        \displaystyle\tfrac{2d(v)-\delta}{1-\delta}\cdot R_0(v), \quad&v\in V_\delta^{1/2},\\[0.6em]
        R_0(v), &v \in V_{1/2}^1\;.
    \end{cases}
\]
There is a polynomial-time algorithm $\delta$-ITP that, given any $W \subseteq V$, computes a CVRP solution that serves the clients in $W$ with cost at most $\alpha\cdot\Opt + \sigmageneric(W)$.
\end{lemma}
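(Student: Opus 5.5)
The plan is to follow the classical iterated tour partitioning scheme of \textcite{HR85} and \textcite{AG87}, with two changes: a reduced ``tank'' of size $1-\delta$, and inflated demands for the medium clients.

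\emph{TSP tour.} First compute an $\alpha$-approximate TSP tour $C$ on $W\cup\{s\}$. Shortcutting the optimum solution $\lT^*$ to $W\cup\{s\}$ gives a closed walk of cost at most $\Opt$, so $c(C)\le\alpha\cdot\Opt$.

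\emph{Large clients.} Clients $v\in W\cap V_{1/2}^1$ are served by singleton tours, at cost exactly $R_0(v)=\sigma_\delta(v)$. We shortcut $C$ past them.

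\emph{Effective weights and cutting.} For the remaining clients, define an effective weight $w(v)=d(v)$ if $v\in V_0^\delta$ and $w(v)=2d(v)-\delta$ if $v\in V_\delta^{1/2}$. In both cases $w(v)\le 1-\delta$, and $w(v)\ge d(v)$ on $V_\delta^{1/2}$. Walk along $C$ starting from $s$ and lay the clients out as consecutive half-open intervals of length $w(v)$ on the real line. Choose an offset $\theta\in[0,1-\delta)$ and place cut points at $\theta+k(1-\delta)$ for $k\ge 0$. Consider a cut point falling in the interval of client $v$, or at its boundary. There we close the current segment, travel $v\to s\to v$ (extra cost at most $R_0(v)$), and continue. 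Then assign $v$ as follows:
\begin{itemize}
\item if $v$ is small, assign it to the earlier segment;
\item if $v$ is medium, assign it to the segment that contains the larger part of its interval.
\end{itemize}
The closed walks between consecutive cuts, closed at the depot, are the output tours. Their total cost is $c(C)$ plus the sum of the detour costs. Because every interval has length at most $1-\delta$, each client is hit by at most one cut point.

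\emph{Feasibility.} This is the main obstacle, and where the inflated weight $2d-\delta$ is needed. A segment consists of clients fully inside a window of length $1-\delta$, plus at most two straddling clients, one at each end.

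If the straddling client at an end is small, its true demand is at most $\delta$ and the fully contained clients have total weight at most $1-\delta$. Since the true demand of every client is at most its weight, the total demand is at most $1$.

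If the straddling client $v$ is medium and assigned to this segment, then at least $\tfrac12 w(v)=d(v)-\delta/2$ of its interval lies in the window. The fully contained clients therefore have weight at most $1-\delta-(d(v)-\delta/2)$, and the demand including $v$ is at most $1-\delta/2$.

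Two straddlers of the same segment need a small case check: they can both be assigned to this segment only if both contribute portions inside the window, and the same accounting then still gives total demand at most $1$. I expect this bookkeeping, including degenerate cuts exactly at interval boundaries, to be where care is needed.

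\emph{Cost.} Choose $\theta$ uniformly at random. Client $v$ is hit by a cut with probability $w(v)/(1-\delta)$, so its expected detour cost is at most $\tfrac{w(v)}{1-\delta}R_0(v)$. This equals $\sigma_\delta(v)$ for both small and medium clients. Summing over clients and adding the singleton tours, the expected total cost is at most $\alpha\cdot\Opt+\sigma_\delta(W)$.

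\emph{Derandomization.} Only polynomially many combinatorially distinct offsets $\theta$ exist, determined by the interval endpoints modulo $1-\delta$. Trying each of them and keeping the cheapest solution yields a deterministic polynomial-time algorithm meeting the bound.
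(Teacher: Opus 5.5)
The proof fails at the two-straddler step you deferred as ``a small case check'': with your assignment rule the tours need not be feasible. A small cut client always joins the earlier segment. So the segment of a window $[\theta+k(1-\delta),\theta+(k+1)(1-\delta))$ can receive both a medium left straddler $u$, which has the larger part of its interval in the window, and a small right straddler $v$. The medium client's demand exceeds its in-window weight by up to $d(u)-\tfrac12 w(u)=\delta/2$. The small client can have only an arbitrarily small portion $\xi$ of its interval inside the window and still bring demand up to $\delta$.

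Concretely:
\begin{itemize}
\item let $u$ have just over $w(u)/2=d(u)-\delta/2$ of its interval in the window;
\item fill the window with fully contained small clients up to $\xi$ before the next cut point (for these, weight equals demand);
\item let $d(v)=\delta$, with its interval starting $\xi$ before that cut point.
\end{itemize}
The segment's demand is then close to $1+\delta/2-\xi>1$, about $7/6$ for $\delta=\tfrac13$ and $d(u)=0.45$. Your claim that ``the same accounting then still gives total demand at most $1$'' is therefore false. The bound for a lone small straddler already uses the full slack $\delta$, leaving no room for the medium client's excess $\delta/2$.

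You can repair this inside your framework by applying the larger-part rule to every cut client, small or medium. An assigned straddler's demand then exceeds its in-window weight by at most $d(v)-\tfrac12 w(v)$. This is $d(v)/2\le\delta/2$ for small clients and exactly $\delta/2$ for medium ones, so each segment has demand at most $(1-\delta)+2\cdot\tfrac{\delta}{2}=1$. Your cost and derandomization arguments are unaffected, since the detour $R_0(v)$ is paid exactly when $v$ is hit, which happens with probability $w(v)/(1-\delta)$.

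With that change your route is a valid alternative to the paper's. The paper instead cuts along the true demands scaled by $1/(1-\delta)$ and lets only the right-end cut client join a segment. That segment's interior demand is at most $1-\delta$, so small clients always fit. A medium cut client that does not fit gets its own singleton tour at extra cost $2R_0(v)$. Because it fits with conditional probability at least $\delta/d(v)$, the expected charge is again $\tfrac{2d(v)-\delta}{1-\delta}R_0(v)$. Your inflated weights replace this singleton fallback by cutting medium clients more often, with each cut costing only $R_0(v)$.
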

This is proved in \cref{app:proof-delta-itp}.
As noted earlier, the $\delta$-ITP provides a good solution for clients with small demand. We next give a description of the Relative Greedy Algorithm due to Traub \cite{traub2025approximation} that combines the $\delta$-ITP algorithm with an initial greedy phase, to serve a subset of clients with demand at least $\delta$.

\begin{algorithm}[ht]
	Set $A_0 \gets V_\delta^1$ and $t \gets 0$\\
	\While{There is a tour $ T\subseteq A_t$ such that $c(T) \leq \sigmageneric(T)$}
	{Let $S_{t+1} \subseteq A_t$ be a non-empty tour that minimizes $c(T)/\sigmageneric(T)$ \\
    $A_{t+1}\gets A_t\setminus S_{t+1}$,\quad $t \gets t+1$}
	Apply the $\delta$-ITP Algorithm to the remaining clients $A_t \cup V_0^\delta$ \\
	\textbf{Return} the resulting solution
	\caption{Relative Greedy for CVRP}
    \label{algorithm:relativegreedy}
\end{algorithm}

\begin{theorem}[Traub \cite{traub2025approximation}]\label{theorem:traubvygen}
	Algorithm \ref{algorithm:relativegreedy} is a polynomial-time $(\alpha + \frac{1}{1 - \delta}+ \ln(2))$ -- approximation algorithm for any fixed $\delta \in (0, 1/2)$.
\end{theorem}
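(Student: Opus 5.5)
The plan is to bound the greedy phase and the $\delta$-ITP phase separately, and then use $R_1(V)\le\Opt$ to trade them off. Polynomial running time is easy. Every tour $T\subseteq V_\delta^1$ that respects the capacity has fewer than $1/\delta$ clients, so there are $n^{O(1/\delta)}$ candidate client sets. For each one we can compute an optimal tour by brute force in constant time, since $\delta$ is fixed. Hence each greedy step runs in polynomial time, and there are at most $n$ steps. By \cref{lem:delta_tank}, the final solution costs at most
\[
\sum_i c(S_i) + \alpha\Opt + \sigmageneric(A_{t}) + \sigmageneric(V_0^\delta).
\]

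\emph{Greedy phase.} Fix a step $i$. Shortcutting the tours of $\lT^*$ to $A_i$ gives tours $T^*[A_i]$ inside $A_i$. Their total cost is at most $\Opt$ and their total potential is exactly $\sigmageneric(A_i)$. Averaging over them gives
\[
\frac{c(S_{i+1})}{\sigmageneric(S_{i+1})}\le \frac{\Opt}{\sigmageneric(A_i)} .
\]
The loop condition also gives $c(S_{i+1})\le \sigmageneric(S_{i+1})$. Hence
\[
c(S_{i+1})\le \int_{\sigmageneric(A_{i+1})}^{\sigmageneric(A_i)} \min\{1,\Opt/x\}\,dx .
\]
At termination, the same averaging applied to the nonempty tours $T^*[A_t]$ shows $\sigmageneric(A_t)<\Opt$ whenever $A_t\neq\emptyset$. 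Write $y=\sigmageneric(V_\delta^1)$ and telescope. Bounding $\sigmageneric(A_t)\le\int_0^{\sigmageneric(A_t)}\min\{1,\Opt/x\}dx$ then yields
\[
\sum_i c(S_i)+\sigmageneric(A_t)\le f(y),\qquad f(y)\coloneq\int_0^y\min\{1,\Opt/x\}\,dx ,
\]
so $f(y)=y$ for $y\le\Opt$ and $f(y)=\Opt(1+\ln(y/\Opt))$ otherwise.

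\emph{Relating potentials to $R_1$.} For $v\in V_0^\delta$ we have $\sigmageneric(v)=R_1(v)/(1-\delta)$. For $v\in V_\delta^{1/2}$, the inequality $(2d-\delta)/(1-\delta)\le 2d$ is equivalent to $2d\le1$, so $\sigmageneric(v)\le 2R_1(v)$. For $v\in V_{1/2}^1$ we have $\sigmageneric(v)=R_0(v)=2R_1(v)$. Set $x\coloneq R_1(V_\delta^1)$. Then $R_1(V)\le\Opt$ gives $R_1(V_0^\delta)\le\Opt-x$ and $y\le 2x$. Since $f$ is nondecreasing, the total cost is at most
\[
\alpha\Opt+\frac{\Opt-x}{1-\delta}+f(2x).
\]

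\emph{Optimization (main step).} It remains to maximize $g(x)=\frac{\Opt-x}{1-\delta}+f(2x)$ over $x\in[0,\Opt]$. Since $f$ is concave, so is $g$. Its derivative is $2-\frac1{1-\delta}>0$ for $2x<\Opt$, and $\frac{\Opt}{x}-\frac1{1-\delta}$ for $2x>\Opt$. Because $\delta<1/2$, the maximizer is $x=(1-\delta)\Opt$, which lies in $(\Opt/2,\Opt]$. This gives
\[
g\le\Opt\Bigl(\frac{\delta}{1-\delta}+1+\ln(2(1-\delta))\Bigr)=\Opt\Bigl(\frac1{1-\delta}+\ln 2+\ln(1-\delta)\Bigr)\le\Opt\Bigl(\frac1{1-\delta}+\ln2\Bigr).
\]
Adding $\alpha\Opt$ gives the claimed ratio $\alpha+\frac1{1-\delta}+\ln 2$.

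The delicate point is the telescoping/integral bound, in particular handling the leftover term $\sigmageneric(A_t)$ via the termination condition. I expect that step to need the most care.
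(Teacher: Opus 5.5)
Your proposal is correct and follows essentially the same route as the paper. Both use the averaging bound $c(S_{i+1})\le\sigmageneric(S_{i+1})\min\{1,\Opt/\sigmageneric(A_i)\}$, integrate it to $\Opt(1+\max\{0,\ln\gamma\})$ with $\gamma=2R_1(V_\delta^1)/\Opt$, charge the small clients $R_1(V_0^\delta)/(1-\delta)$, and maximize at $\gamma=2-2\delta$ (your $x=(1-\delta)\Opt$). Your explicit use of the termination condition to get $\sigmageneric(A_t)<\Opt$, so that the leftover large clients fit under the integral at rate $1$, is a step the paper leaves implicit, and you handle it correctly.
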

\begin{proof}
The runtime is dominated by the enumeration of candidate tours in the greedy phase. Since every such tour contains at most $1 / \delta$ clients, this can be done in polynomial time.

It remains to prove the approximation guarantee. Fix iteration $t$ and let $A_t$ be the set of remaining large clients. An averaging argument gives
\[\min_{T \in \lT^*} \frac{c(T)}{\sigmageneric(T[A_t])} 
\le \frac{\sum_{T \in \lT^*} c(T)}{\sum_{T \in \lT^*} \sigmageneric(T[A_t])}
=
\frac{\Opt}{\sigmageneric(A_t)}.\]
Since the algorithm chooses a tour $S_t$ of minimum cost-to-potential ratio of at most $1$, we have
\[c(S_t)\le \sigmageneric(S_t)\min\left\{1,\frac{\Opt}{\sigmageneric(V)-\sum_{i < t}\sigmageneric(S_i)}
\right\}.\]
Thus, the total cost of serving large clients $V_\eta^1$ in the greedy phase and the following $\delta$-ITP phase are at most 
\[
\int_0^{\sigmageneric(V)}\min\left\{\frac{\Opt}{\sigmageneric(V)-x}, 1\right\}\,dx
\leq (1+\max\{0, \ln(\gamma)\})\Opt,
\]
where $\gamma \coloneq 2R_1(V_\delta^1) / \Opt \geq \sigmageneric(V) / \Opt$.
By \cref{lem:delta_tank}, the cost for serving the clients $V_0^\delta$ is at most
$R_1(V_0^\delta) / (1-\delta)$, where
\[R_1(V_0^\delta) = R_1(V) - (\gamma / 2)\cdot \Opt \leq (1-\gamma/2)\cdot\Opt\;.\]
Including the cost of the TSP tour, we obtain an approximation ratio of
\[\alpha + 1 + \max\left\{0, \ln(\gamma)\right\} + \frac{1 - \gamma/2}{1-\delta}\;.\]
This expression is maximized for $\gamma = 2 - 2\delta$, which concludes the proof. \qed
\end{proof}

\begin{figure}[t]
    \centering
     \begin{tikzpicture}[scale = 0.85]
            \begin{axis}[
                axis lines = left,
                xlabel = {Removed Potential $x$},
                xlabel shift = -10pt,
                ylabel = {Cost Density},
                ylabel shift = -6pt,
                ylabel style={xshift=-10pt},
                ymin=0.3, ymax=1.2, xmin=0, xmax=1.5,
                xtick={0,1.5}, xticklabels={0, $\sigmageneric(V_\delta^1)$},
                ytick={1}, yticklabels={1},
                title={}
            ]
                % Worst case curve
                \addplot[domain=0:1, samples=100, thick, blue] {1/(2-x)};
                \node[blue, right, font=\footnotesize] at (axis cs: 0.15, 0.8) {$\frac{\Opt}{\sigmageneric(V_\delta^1)-x}$};
                \node[blue, right, font=\footnotesize] at (axis cs: 0.3, 0.4) {Greedy Phase};
                \addplot[domain=1:1.4, samples=100, thick, ForestGreen] {1};
                % \node[ForestGreen, right, font=\footnotesize] at (axis cs: 1, 1.1) {$1$};
                \node[ForestGreen, right, font=\footnotesize, align=left] at (axis cs: 1.05, 0.4) {$\delta$-ITP};

                % Real cost staircase function
                \addplot[const plot, thick, red] coordinates {
                    (0,0.5) (0.15,0.54) (0.35,0.59) (0.55,0.64) (0.75,0.8) (0.9,0.87) (1.0,0.99) (1.2,0.99) (1.4,0.99)
                };
                %(0,0.5) (0.15,0.54) (0.35,0.605) (0.55,0.689) (0.75,0.8) (0.9,0.909) (1.0,0.99) (1.2,0.99) (1.4,0.99)
                \node[red, right, font=\footnotesize] at (axis cs: 1, 0.9) {Real Cost};

                % Area
                \addplot[name path=f, domain=0:1, samples=100, opacity=0] {1/(2-x)};
                
                \path[name path=axis] (axis cs:0,0.3) -- (axis cs:1,0.3);
                \addplot [
                    thick,
                    color=blue,
                    fill=blue, 
                    fill opacity=0.1
                ]
                fill between[
                    of=f and axis,
                    soft clip={domain=0:1},
                ];
                \addplot[name path=g, domain=1:1.4, samples=100, opacity=0] {1};
                \path[name path=axis] (axis cs:1,0.3) -- (axis cs:1.4,0.3);
                \addplot [
                    thick,
                    color=ForestGreen,
                    fill=ForestGreen, 
                    fill opacity=0.1
                ]
                fill between[
                    of=g and axis,
                    soft clip={domain=1:1.4},
                ];

            \end{axis}
        \end{tikzpicture}
\caption{\small Illustration of the cost-to-potential ratio over the course of Algorithm \ref{algorithm:relativegreedy}. The greedy phase ends when this ratio is equal to 1.}\label{fig:cost_density}
\end{figure}

\subsection{The Matching Savings Algorithm}\label{sec:matching}
In this section we provide an algorithm similar to the matching algorithm by \textcite{FMRS25}. This algorithm will provide the guarantee that either itself is already good enough or the instance contains a lot of single clients.

Let $H$ be the graph with vertex set $V_{1/3}^1$ and an edge $\{v,w\}$ whenever $d(v) + d(w) \leq 1$. For each edge $\{v,w\}$, denote the cost of its two client tour by $c_{\{v,w\}}$. Define its savings as
\[
    u(v,w):=\sigma_{1/3}(v)+\sigma_{1/3}(w)-c_{\{v,w\}}.
\]
The matching savings algorithm computes a maximum weight matching in $(H,u)$ to maximize the savings. 
\begin{algorithm}\caption{Matching Savings for CVRP}\label{alg:matching}
    Initialize $\lT \gets \emptyset$.\\
    Compute a maximum weight matching $M$ in $(H, u)$.\\
    \For{$(v,w)\in E(M)$}{Add the tour serving $v$ and $w$ to $\lT$}
    Apply the $\frac13$-ITP Algorithm to $V\setminus V(M)$ and add the resulting tours to $\lT$.\\
    \textbf{Return} $\lT$
\end{algorithm}
\begin{lemma}\label{lem:MatchingSavingsCostBound}
    Algorithm~\ref{alg:matching} produces a feasible CVRP solution of cost at most
    \[\left(\alpha +\tfrac{3}{2}\right) \Opt + \tfrac{3}{2}(R_1(V_\single)-R_1(V_\double)) - \tfrac{1}{2}(R_0(V_\single) - R_0(V_\double)) \]
\end{lemma}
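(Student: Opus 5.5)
The plan is to express the output cost through the total savings of $M$, and then lower-bound those savings by exhibiting one explicit matching built from the double clients of $\lT^*$. Everything reduces to a per-tour inequality.

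\emph{Step 1: rewrite the cost.} Each matched pair $\{v,w\}\in M$ is served by its two-client tour, at cost $c_{\{v,w\}}$. By \cref{lem:delta_tank} with $\delta=\tfrac13$, the remaining clients cost at most $\alpha\Opt+\sigma_{1/3}(V\setminus V(M))$. Adding the two parts, the total cost is at most
\[
\alpha\Opt+\sigma_{1/3}(V)-u(M).
\]

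\emph{Step 2: lower-bound $u(M)$.} Let $M^*$ pair the two double clients $v,w$ of each tour $T\in\lT^*_\double$. The pair is an edge of $H$, since $d(v)+d(w)\le 1$. Shortcutting $T$ to $\{v,w\}$ gives $c_{\{v,w\}}\le c(T)$. Because $M$ is a maximum-weight matching,
\[
u(M)\ \ge\ u(M^*)\ \ge\ \sigma_{1/3}(V_\double)-c(\lT^*_\double).
\]
(If some savings $u(e)$ is negative, we may still use $M^*$, since we only need a lower bound on the maximum.)

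\emph{Step 3: simplify the potentials.} A case check on the definition of $\sigma_{1/3}$ gives:
\begin{itemize}
\item for $d(v)\le\tfrac13$: $\sigma_{1/3}(v)=\tfrac32R_1(v)$;
\item for $\tfrac13<d(v)\le\tfrac12$: $\sigma_{1/3}(v)=(3d(v)-\tfrac12)R_0(v)=3R_1(v)-\tfrac12R_0(v)$;
\item for $d(v)>\tfrac12$: $\sigma_{1/3}(v)=R_0(v)=3R_1(v)-\tfrac12R_0(v)$, since $R_1(v)=R_0(v)/2$ there.
\end{itemize}
So every $v\in V_{1/3}^1$ satisfies $\sigma_{1/3}(v)=3R_1(v)-\tfrac12R_0(v)$.

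Substituting Steps 2 and 3 into Step 1 bounds the cost by
\[
\alpha\Opt+\tfrac32R_1(V_0^{1/3})+3R_1(V_\single)-\tfrac12R_0(V_\single)+c(\lT^*_\double).
\]
Comparing with the claimed bound, it suffices to show
\[
\tfrac32R_1(V)+c(\lT^*_\double)\ \le\ \tfrac32\Opt+\tfrac12R_0(V_\double).
\]

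\emph{Step 4: prove this tour by tour.} This is the step that needs care. Write $R_1(T)=\sum_{x\in T}R_1(x)$.

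For a tour $T\in\lT^*$ that is not double, we need $R_1(T)\le c(T)$. This holds because $c(T)\ge R_0(x)$ for every $x\in T$ and $\sum_{x\in T}d(x)\le 1$.

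For a double tour $T$ with double clients $v,w$, we need
\[
3R_1(T)\ \le\ c(T)+R_0(v)+R_0(w).
\]
Set $C=c(T)$, so $C\ge R_0(x)$ for all $x\in T$. The other clients of $T$ have total demand at most $1-d(v)-d(w)$, so their contribution is bounded using $R_0(x)\le C$. This gives
\[
3R_1(T)\le 3\bigl(1-d(v)-d(w)\bigr)C+3d(v)R_0(v)+3d(w)R_0(w).
\]
Subtract the right-hand side $C+R_0(v)+R_0(w)$. The difference is linear in $R_0(v)$ and $R_0(w)$, with coefficients $3d(v)-1>0$ and $3d(w)-1>0$, so it is largest when $R_0(v)=R_0(w)=C$. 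At that point the difference is $3C-3C=0$.

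Summing over all tours of $\lT^*$ completes the proof. The main obstacle is this double-tour inequality: it holds with equality in the extreme case, so there is no slack in the constants.
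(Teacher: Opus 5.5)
Your proof is correct and follows essentially the same route as the paper: you compare the maximum-savings matching with the matching of double-client pairs from $\lT^*_\double$, then finish with $R_1(T)\le c(T)$ on all other tours and the per-tour inequality $3R_1(T)\le c(T)+R_0(v_T)+R_0(w_T)$ on double tours. That last inequality is exactly the paper's bound on $\sigma_{1/3}(T\setminus V_{1/3}^1)$ once $\sigma_{1/3}=\tfrac32 R_1$ is substituted on small clients; your savings form $\alpha\Opt+\sigma_{1/3}(V)-u(M)$ and the identity $\sigma_{1/3}(v)=3R_1(v)-\tfrac12R_0(v)$ on $V_{1/3}^1$ simply spell out bookkeeping the paper leaves implicit.
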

\begin{proof}
    By \cref{lem:delta_tank}, the cost of the produced solution is at most
    \[\alpha \cdot \Opt + \sigmathird(V \setminus V(M)) + \sum_{\{v,w\} \in E(M)}c_{\{v,w\}}.\]
    Since Algorithm~\ref{alg:matching} maximizes the total saving, this upper bound is no larger than the corresponding expression for any other matching in $H$.
    For $T \in \lT^*_\double$ denote by $v_T, w_T$ its double clients. Note that they form a matching. Therefore, the cost of the produced solution is at most
     \begin{equation}\label{eq:cost-double-matching}
         \alpha \cdot \Opt + \sigmathird(V\setminus V_\double) + c(\lT^*_\double).
     \end{equation}
     Using $R_0(v) \leq c(T)$ for $v \in T$, we compute for $T \in \lT^*_\double$:
     \begin{align*}
         \sigmathird(T \setminus V_{1/3}^1) &\leq \tfrac{3}{2}(1 - d(T \cap V_{1/3}^1))c(T)\\[0.4em]
         &\leq \tfrac{1}{2}c(T) -\tfrac{3}{2}R_1(T \cap V_{1/3}^1) + \tfrac{1}{2}R_0(T \cap V_{1/3}^1)
     \end{align*}
    Inserting this into \eqref{eq:cost-double-matching} and applying the definition of $\sigmathird$ yields a bound of
    \begin{align*}
        \alpha \cdot \Opt &+ \tfrac32 \left(R_1(V\setminus \lT^*_\double) + c(\lT^*_\double)\right)\\[0.4em]
        &+ \tfrac{3}{2}(R_1(V_\single)-R_1(V_\double)) - \tfrac{1}{2}(R_0(V_\single) - R_0(V_\double))
    \end{align*}
    Applying the bound $R_1(T) \leq c(T)$ to every $ T \in \lT^*\setminus \lT^*_\double$ then gives the final result. \qed
\end{proof}

\section{Average Greedy Algorithm}\label{sec:greedy}
In this section, we describe the Average Greedy Algorithm, an improvement of Algorithm \ref{algorithm:relativegreedy} that gives additional guarantees on the amount of selected large clients. Throughout the section, we fix constants $\delta \in (0,\tfrac12)$ and $0 < \epsilon,\eta < \tfrac13$.
Recall that the Relative Greedy Algorithm (described in Algorithm \ref{algorithm:relativegreedy} for the case $\delta=\eta$) selects in iteration $t+1$ a tour $S_{t+1}$ with cost-to-potential ratio
\begin{equation}\label{eq:idea-greedy-cost-density}
    \frac{c(S_{t+1})}{\sigmageneric(S_{t+1})} \leq \frac{\Opt}{\sigmageneric(V_\eta^1) - \sigmageneric(\lS_t)}\;,
\end{equation}
where  $\lS_t$ denotes the set of tours selected in the first $t$ iterations. As sketched in \eqref{eq:idea-remove-single-tour-from-average} for one tour, we may discard any subset of optimum tours $\discardedtours$ from the averaging argument and get
\begin{equation}\label{eq:idea-improved-cost-density}
    \frac{c(S_{t+1})}{\sigmageneric(S_{t+1})} \leq \frac{\Opt - \sum_{T \in \discardedtours} c(T)}{\sigmageneric(V_\eta^1) - \sigmageneric(\lS_t) - \sum_{T \in \discardedtours} \sigmageneric(T \setminus \lS_t)}\;.
\end{equation}
For an optimum tour $T^*$, this is particularly useful if a client $v$ with large demand has already been selected from it. 
In that case, assuming that all clients in the tour have approximately the same distance to the depot, we subtract $c(T^*) \geq R_0(v)$ from the numerator and $\sigmageneric(T^*\setminus \lS_t) \leq \sigmageneric(T^* \setminus \{v\}) \lessapprox 2(1- d(v))\cdot R_0(v)$ from the denominator. Thus, if $1 / 2(1-d(v))$ exceeds the bound in \eqref{eq:idea-greedy-cost-density}, discarding $T^*$ indeed yields an improvement.

By this argument, \eqref{eq:idea-improved-cost-density} yields a significantly better bound than \eqref{eq:idea-greedy-cost-density} if we can guarantee that the previously selected tours contain a sufficient amount of clients with large demand. More precisely, we will guarantee the existence of a subset $B \subseteq \lS_t \cap V_{1/3}^1$ with
\begin{equation}\label{eq:R_expectations}
    R_0(B) \approx \frac{\sigmageneric(\lS_t)}{\sigmageneric(V_\eta^1)}\cdot R_0(V_\single)\;, \qquad
    R_1(B) \approx \frac{\sigmageneric(\lS_t)}{\sigmageneric(V_\eta^1)}\cdot R_1(V_\single)\;.
\end{equation}
We defer to \cref{lemma:r0-r1-close-to-avg} for the exact statement\footnote{Note that the guarantees in \eqref{eq:R_expectations} can only be given for single clients: For a previously selected client $v \in \lS_t \cap V_{1/3}^1$, we (possibly) remove the optimum tour $T^*_v$ containing it from the average to obtain an improved cost bound. Therefore, if $v \in V_\double$ and $T^*_v \cap V_{1/3}^1 = \{v,w\}$, we can not guarantee to serve $w$ while maintaining the improved cost bound \eqref{eq:idea-improved-cost-density}. Observe that for instances with many double clients, Algorithm \ref{alg:matching} produces a very good solution.}. 
Combining this with \eqref{eq:idea-improved-cost-density} and the observations from the previous paragraph, this yields a bound of
\begin{equation}\label{eq:idea-cost-density-single-clients}
    \frac{c(S_{t+1})}{\sigmageneric(S_{t+1})} \lessapprox \frac{\Opt - \frac{\sigmageneric(\lS_t)}{\sigmageneric(V_\eta^1)}R_0(V_\single)}{\sigmageneric(V_\eta^1) - \sigmageneric(\lS_t) - \frac{\sigmageneric(\lS_t)}{\sigmageneric(V_\eta^1)}(2R_0(V_\single) - 2R_1(V_\single))}\;,
\end{equation}
assuming for now that all clients in $B$ come from different tours in the optimum solution. We will show in \cref{sec:cost-densities} that the simplifying assumptions made here are not necessary, and further improve the bound for $\delta = 1/3$.
The bound \eqref{eq:idea-cost-density-single-clients} is useful because if the matching algorithm does not perform well, we must have a significant amount of single clients with large demand by \cref{lem:MatchingSavingsCostBound}. At least in the beginning of the algorithm, when \eqref{eq:idea-greedy-cost-density} is in the worst case close to $1/2$, \eqref{eq:idea-cost-density-single-clients} provides a significantly better bound. Inserting this into the proof of \cref{theorem:traubvygen} would immediately improve the approximation ratio.

In the following, we describe a greedy algorithm that maintains \eqref{eq:R_expectations} while simultaneously guaranteeing that the cost-to-potential ratio of the selected tours is close to optimum. In particular, this will imply approximate versions of \eqref{eq:idea-greedy-cost-density} and \eqref{eq:idea-improved-cost-density}, and therefore \eqref{eq:idea-cost-density-single-clients}. To ensure \eqref{eq:R_expectations}, we divide the single clients into buckets based on their demand:
\begin{definition}[Bucket mass]
    Let $\epsbucket > 0$. For $W\subseteq V$ and $b=1,\ldots,\ceil{1/\epsbucket}$, define
    \[
        r_b(W)
        \coloneq
        R_0\left(
            W\cap
            V_{1/3+(b-1)\cdot(2/3)\epsbucket}
             ^{1/3+b\cdot(2/3)\epsbucket}
        \right).
    \]
\end{definition}
With foresight, we set $\epsbucket \coloneq \epsilon^{1/3}$. For each bucket $b$, we guarantee
\begin{equation}\label{eq:greedy-rb-bound-informal}
r_b(\lS_t) \gtrapprox \frac{\sigmageneric(\lS_t)}{\sigmageneric(V_\eta^1)} \cdot r_b(V_\single)\;.
\end{equation}
By taking a subset of the clients in each bucket if the inequality is too strict\footnote{
    A strict inequality can be a problem because the bound \eqref{eq:idea-improved-cost-density} is not always better than \eqref{eq:idea-greedy-cost-density}, particularly when removing tours that correspond to selected clients with demand close to $1/3$. Taking a subset which satisfies the bound with approximate equality therefore simplifies the analysis.
}, one can then obtain a set $B \subseteq \lS_t \cap V_{1/3}^1$ satisfying \eqref{eq:R_expectations}.

For maintaining this guarantee, which depends on the values $r_b(V_\single)$, the algorithm needs to know them approximately. Since $\epsilon$ is a constant, we can enumerate a constant number of guesses $\bigl(\tilde r_b\bigr)_{b=1}^{\lceil 1/\epsbucket\rceil} \geq 0$ such that one of these guesses satisfies
\begin{equation}\label{eq:rb-guess-accuracy}
    r_b(V_\single) - \epsilon\cdot \sigmageneric(V_\eta^1) \leq \tilde r_b \leq r_b(V_\single) \qquad \text{for every } b=1,\ldots,\left\lceil 1/{\epsbucket}\right\rceil.
\end{equation}
By running the algorithm for every guess vector and returning the cheapest solution of the successful runs, we may assume throughout the analysis that such a guess is fixed. We denote it by $\tilde r_b(V_\single)$. Additionally, we assume
\begin{equation}\label{eq:any-tour-small-r1}
        R_1(T) \leq \epsilon\cdot R_1(V_\eta^1)
\end{equation}
for any tour $T$ on $V_\eta^1$. This assumption can always be satisfied by creating copies of the instance and is proven in \cref{app:proof-any-tour-small-r1}.

To simplify the analysis, we define for every tour $T$ its truncated cost as
\[\cgeneric(T)\coloneq\min\{c(T),\sigmageneric(T)\}\;.\]
At the end of the greedy phase, we will simply dissolve all selected tours $T$ with $c(T) > \sigmageneric(T)$ and serve their clients in the $\delta$-ITP routine, which results in an additional cost of at most $\sigmageneric(T)$. In the algorithm, we use the following definition:
\begin{definition}[Deviation]
    For a cost density $\phi\geq 0$ and a tour $T$, let
    \[u^\phi(T)\coloneq \left(u_c^\phi(T),u_1^\phi(T),\ldots,u_{\ceil{1/\epsbucket}}^\phi(T)\right)\;,\]
    where
    \begin{align*}
        u_c^\phi(T) &\coloneq \cgeneric(T)-\sigmageneric(T)\cdot \phi,\\
        u_b^\phi(T) &\coloneq r_b(T) - \frac{\sigmageneric(T)}{\sigmageneric(V_\eta^1)}\cdot\tilde{r}_b(V_\single), \qquad b=1,\ldots,\ceil{1/\epsbucket}\;.
    \end{align*}
\end{definition}
When setting $\phi$ to the desired bound for the cost-to-potential ratio, $u_c^\phi(T)$ describes the deviation from this target. Similarly, $u_b^\phi(T)$ corresponds to the deviation from the bound in \eqref{eq:greedy-rb-bound-informal}. Of course, it is in general impossible for a single tour to have a small deviation, for example because it cannot contain $\ceil{1/\epsbucket}$ clients with demand at least $1/3$. For this reason, we mostly consider the sum of the deviation vectors:
\begin{definition}
    Fixing selected tours $S_1, \ldots, S_t$ and cost densities $\phi_1, \ldots, \phi_t$, the \emph{discrepancy} is defined by
    \[D(t)\coloneq\sum_{i=1}^t u^{\phi_i}(S_i), \qquad D(0)\coloneq 0\;.\]
    The \emph{one-sided discrepancy} $\bar D(t)$ is defined coordinate-wise by
    \[\bar D_c(t)\coloneq\max\{0,D_c(t)\},\qquad \bar D_b(t)\coloneq\min\{0,D_b(t)\}\]
    for $b=1,\ldots,\ceil{1/\epsbucket}$.
\end{definition}
The one-sided discrepancy retains only unfavorable deviations: If $D_c(t) < 0$, then the selected tours cost less than expected. Similarly, if $D_b(t) > 0$, we have selected more clients in that demand bucket than required, which is also fine. Since the target value $r_b(V_\single)$ does not account for double clients, it might not even be possible to keep $D_b(t)$ small.

In \cref{lemma:greedy-rule}, we will show that $\phi_t$ is not larger than any bound of the form \eqref{eq:idea-improved-cost-density} at each stage of the algorithm. From the definition of $u^\phi$ and $\bar D$, we then obtain approximate versions of \eqref{eq:idea-improved-cost-density} and \eqref{eq:greedy-rb-bound-informal} if $\bar D(t)$ is small. To bound $\bar D(t)$, observe
\[\norm{\bar D(t+1)}_2^2 \leq \norm{\bar D(t)}_2^2 + 2\langle \bar D(t), u^{\phi_{t+1}}(S_{t+1}) \rangle + \norm{u^{\phi_{t+1}}(S_{t+1})}_2^2\;.\]
The algorithm will ensure that the scalar product is always non-positive, which immediately gives a bound on $\bar D(t)$ by induction (see \cref{lemma:discrepancy_bound}). In fact, we simply choose the cost bound $\phi_{t+1}$ as small as possible such that this is possible. A complete description of our algorithm is given in Algorithm \ref{alg:average_relative_greedy}.

\begin{algorithm}[ht]
    \KwInput{CVRP instance $(V,s,c,d)$, parameters $\delta \in (0,\frac{1}{2})$ and $\eta,\epsbucket > 0$, guesses $\tilde r_b(V_\single)$ for $b \in \{1,\ldots,\ceil{1/\epsbucket}\}$}
	Set $A_0 \gets V_\eta^1$ and $t\gets 0$\\
	\While{$A_t \neq \emptyset$}
	{      
        Let $\phi_{t+1} \geq 0$ be minimum such that
        \[\min_{\substack{T \subseteq A_t \\ \text{non-empty tour}}}\langle \bar D(t),u^{\phi_{t+1}}(T) \rangle \leq 0\]
        and let $S_{t+1}$ be a tour attaining this minimum \label{algline:average_greedy_tour_selection}\\
        Set $A_{t+1} \gets A_t \setminus S_{t+1}$ and $t \gets t + 1$
    }
    Dissolve all tours $S_i$ with $c(S_i) > \sigmageneric(S_i)$ and apply the $\delta$-ITP algorithm to serve these clients and those in $V_0^\eta$ \\
	\textbf{Return} the resulting solution
	\caption{Average Greedy for CVRP}\label{alg:average_relative_greedy}
\end{algorithm}
For the remainder of this section, we fix a single instantiation of Algorithm \ref{alg:average_relative_greedy}, and assume that the guesses for $\tilde r_b(V_\single)$ are correct. We denote by $k$ the number of iterations, so the solution before the $\delta$-ITP is given by $S_1, \ldots, S_k$. The set of available clients $A_t$ and the cost density $\phi_t$ for an iteration $t \in \{1,\ldots,k\}$ refer to the values defined in Algorithm \ref{alg:average_relative_greedy}.

\begin{lemma}[Greedy Rule]\label{lemma:greedy-rule}
    For $t \in \{0,\ldots,k-1\}$ and any set $\remainingtours \subseteq \lT^*$ with $\remainingtours \supseteq \left\{ T\in\lT^*: T\cap V_{1/3}^1 \subseteq A_t \right\}$ and $\sigmageneric\bigl(\remainingtours[A_t]\bigr)>0$, we have
    \begin{equation}
        \phi_{t+1} \leq \frac{\cgeneric\bigl(\remainingtours[A_t]\bigr)}{\sigmageneric\bigl(\remainingtours[A_t]\bigr)}\;.
    \end{equation}
\end{lemma}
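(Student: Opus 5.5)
Set $\phi^* \coloneq \cgeneric(\remainingtours[A_t])/\sigmageneric(\remainingtours[A_t])$, where (as everywhere for families of tours) $\cgeneric$ and $\sigmageneric$ of a family mean the sum over its tours. The plan is to show that at $\phi = \phi^*$ some non-empty tour $T\subseteq A_t$ satisfies $\langle \bar D(t), u^{\phi^*}(T)\rangle \le 0$. Together with a monotonicity observation, this gives $\phi_{t+1}\le\phi^*$ by minimality of $\phi_{t+1}$ in line~\ref{algline:average_greedy_tour_selection}.

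\textbf{Monotonicity.} Only the coordinate $u_c^\phi(T) = \cgeneric(T) - \phi\,\sigmageneric(T)$ depends on $\phi$, and it is non-increasing in $\phi$. Since $\bar D_c(t)\ge 0$, the whole inner product is non-increasing in $\phi$. Hence the set of feasible $\phi$ is upward closed, and exhibiting one good tour at $\phi^*$ suffices.

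\textbf{Averaging.} The candidate tours are $T[A_t]$ for $T\in\remainingtours$ with $T\cap A_t\neq\emptyset$. Each is a tour on a subset of $A_t$ obtained by short-cutting, so it is feasible. At least one is non-empty because $\sigmageneric(\remainingtours[A_t])>0$, and empty restrictions contribute zero to every coordinate. By linearity of $u^\phi$ in $(\cgeneric,\sigmageneric,r_b)$, I will show that
\[
\sum_{T\in\remainingtours}\langle \bar D(t),u^{\phi^*}(T[A_t])\rangle\le 0 .
\]
Then some summand is non-positive, which gives the desired tour.

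\textbf{Cost coordinate.} Its contribution is $\bar D_c(t)\bigl(\cgeneric(\remainingtours[A_t]) - \phi^*\sigmageneric(\remainingtours[A_t])\bigr)$, which is exactly $0$ by the choice of $\phi^*$.

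\textbf{Bucket coordinates.} For bucket $b$ the contribution is
\[
\bar D_b(t)\Bigl(r_b(\remainingtours[A_t]) - \tfrac{\sigmageneric(\remainingtours[A_t])}{\sigmageneric(V_\eta^1)}\tilde r_b(V_\single)\Bigr).
\]
Since $\bar D_b(t)\le 0$, it suffices to show that the bracket is non-negative whenever $\bar D_b(t)<0$, i.e.\ whenever $D_b(t)<0$. This is the main step.

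\textbf{Main step.} By definition of the deviation,
\[
D_b(t) = r_b(\lS_t) - \frac{\sigmageneric(\lS_t)}{\sigmageneric(V_\eta^1)}\tilde r_b(V_\single).
\]
So $D_b(t)<0$ gives $r_b(\lS_t) < \frac{\sigmageneric(\lS_t)}{\sigmageneric(V_\eta^1)}\tilde r_b(V_\single)$.

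Every single client $v\in A_t$ lies in an optimum tour whose only client from $V_{1/3}^1$ is $v\in A_t$. By the hypothesis on $\remainingtours$, that tour belongs to $\remainingtours$. Hence, using $V_\eta^1 = A_t\,\dot\cup\,\lS_t$ and \eqref{eq:rb-guess-accuracy},
\[
r_b(\remainingtours[A_t]) \ge r_b(V_\single\cap A_t) \ge r_b(V_\single) - r_b(\lS_t) > \tilde r_b(V_\single)\Bigl(1-\tfrac{\sigmageneric(\lS_t)}{\sigmageneric(V_\eta^1)}\Bigr).
\]
On the other hand, $\remainingtours[A_t]$ consists of disjoint client sets inside $A_t$. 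Therefore
\[
\sigmageneric(\remainingtours[A_t]) \le \sigmageneric(A_t) = \sigmageneric(V_\eta^1)-\sigmageneric(\lS_t),
\]
so $\tfrac{\sigmageneric(\remainingtours[A_t])}{\sigmageneric(V_\eta^1)} \le 1-\tfrac{\sigmageneric(\lS_t)}{\sigmageneric(V_\eta^1)}$. Combining the two displays shows the bracket is non-negative. Every bucket contribution is therefore $\le 0$, the averaged sum is $\le 0$, and the lemma follows.

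The delicate point is recognizing that the sign condition $D_b(t)<0$ is exactly what makes the remaining single-client mass in $A_t$ large enough. A naive per-bucket proportionality bound would fail, so this step has to go through $r_b(\lS_t)$ as above.
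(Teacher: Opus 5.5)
Your proof is correct and follows essentially the same route as the paper: you average $u^{\phi^*}$ over the shortcut tours $\remainingtours[A_t]$, the cost coordinate vanishes by the choice of $\phi^*$, and each bucket coordinate is handled by showing $\sum_T u_b^{\phi^*}(T[A_t]) \ge -D_b(t)$, using that the unselected single clients lie in $\remainingtours$ and that $\sigmageneric(\remainingtours[A_t])\le\sigmageneric(V_\eta^1)-\sigmageneric(\lS_t)$. Your monotonicity paragraph is harmless but unnecessary, since feasibility of $\phi^*$ already gives $\phi_{t+1}\le\phi^*$ by minimality; your explicit treatment of empty restrictions covers a point the paper leaves implicit.
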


\begin{proof}
    Let $\sF\coloneq \remainingtours[A_t]$ and $\phi^* \coloneq \cgeneric(\sF) / \sigmageneric(\sF)$. Since the algorithm chooses $\phi_{t+1}$ to be minimum, it suffices to show that there exists a tour $T \in \sF$ such that $\left\langle\bar D(t),u^{\phi^*}(T)\right\rangle\leq 0$.
    For this, we show that $\sum_{T \in \sF}u^{\phi^*}(T)$ has the appropriate sign in every relevant coordinate.

    First, fix a bucket $b$. Every unselected single client belongs to a
    tour of $\remainingtours$: Its optimum tour contains no other
    $1/3$-big client and hence no selected $1/3$-big client. Therefore,
    \[\sum_{T\in\sF}r_b(T) \geq r_b(V_\single)-\sum_{i=1}^t r_b(S_i) \geq \tilde{r}_b(V_\single)-\sum_{i=1}^t r_b(S_i)\;.\]
    Moreover, since the tours in $\sF$ only contain clients that are still available,
    \[\sum_{T\in\sF}\sigmageneric(T) \leq \sigmageneric(V_\eta^1) - \sum_{i=1}^t\sigmageneric(S_i)\;.\]
    It follows that
    \begin{align*}
        \sum_{T\in\sF}u_b^{\phi^*}(T)
        &\geq \tilde{r}_b(V_\single)-\sum_{i=1}^t r_b(S_i) - \left( \sigmageneric(V_\eta^1) - \sum_{i=1}^t\sigmageneric(S_i) \right) \frac{\tilde{r}_b(V_\single)}{\sigmageneric(V_\eta^1)} \\
        &= -\sum_{i=1}^t \left( r_b(S_i) - \sigmageneric(S_i) \frac{\tilde{r}_b(V_\single)}{\sigmageneric(V_\eta^1)} \right) \\
        &=-D_b(t).
    \end{align*}
    If $\bar D_b(t)<0$, then $\bar D_b(t)=D_b(t)$, and hence
    \[\sum_{T\in\sF}u_b^{\phi^*}(T) \geq -\bar D_b(t)>0\;.\]
    If $\bar D_b(t)=0$, the corresponding coordinate does not contribute
    to the scalar product. Thus, in either case,
    \[\bar D_b(t)\sum_{T\in\sF}u_b^{\phi^*}(T)\leq 0\;.\]

    For the cost coordinate, the definition of $\phi^*$ gives
    \[\sum_{T\in\sF}u_c^{\phi^*}(T) = \cgeneric(\sF)-\phi^*\sigmageneric(\sF)\leq 0\;.\]
    Since $\bar D_c(t)\geq 0$, the contribution of the cost coordinate is
    therefore also non-positive. Summing over all coordinates yields
    \[\sum_{T\in\sF} \left\langle\bar D(t),u^{\phi^*}(T)\right\rangle = \left\langle \bar D(t), \sum_{T\in\sF}u^{\phi^*}(T) \right\rangle \leq 0\;.\]
    Since $\sF$ is nonempty, at least one tour $T\in\sF$ satisfies $\left\langle\bar D(t),u^{\phi^*}(T)\right\rangle\leq 0$, which finishes the proof. \qed
\end{proof}
We now use the condition $\langle \bar D(t), u^{\phi_{t+1}}(S_{t+1}) \rangle \leq 0$ to bound the discrepancy:
\begin{lemma}\label{lemma:discrepancy_bound}
    For $t \in \{1,\ldots,k\}$, $\norm{\bar D(t)}_2^2 \leq \sum_{j=1}^t \norm{u^{\phi_j}(S_j)}_2^2$.
\end{lemma}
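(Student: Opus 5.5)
The plan is to prove the one-step inequality
\[
\norm{\bar D(t+1)}_2^2 \leq \norm{\bar D(t)}_2^2 + \norm{u^{\phi_{t+1}}(S_{t+1})}_2^2 \qquad\text{for } t\in\{0,\ldots,k-1\},
\]
and then conclude by induction on $t$, starting from $\bar D(0)=D(0)=0$. Summing the one-step inequalities telescopes to the claimed bound $\norm{\bar D(t)}_2^2 \leq \sum_{j=1}^t \norm{u^{\phi_j}(S_j)}_2^2$. The one algorithmic fact we use is the selection rule in line~\ref{algline:average_greedy_tour_selection}: the tour $S_{t+1}$ attains the minimum, so with $u \coloneq u^{\phi_{t+1}}(S_{t+1})$ we have $\langle \bar D(t), u\rangle \leq 0$.

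For the one-step inequality, the key step is to compare $\bar D(t+1)$ with the vector $\bar D(t)+u$ coordinate by coordinate. Note that $D(t+1)=D(t)+u$ and that $\bar D(t+1)$ is obtained by clipping.

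Cost coordinate. Since $D_c(t)\leq \bar D_c(t)$ and $x\mapsto\max\{0,x\}$ is monotone, we get
\[
0\leq \bar D_c(t+1)=\max\{0,D_c(t)+u_c\}\leq \max\{0,\bar D_c(t)+u_c\}.
\]
Moreover $\max\{0,y\}\leq \abs{y}$, so $\bar D_c(t+1)^2\leq (\bar D_c(t)+u_c)^2$.

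Bucket coordinates. Symmetrically, $D_b(t)\geq \bar D_b(t)$ and $x\mapsto\min\{0,x\}$ is monotone, so
\[
0\geq \bar D_b(t+1)=\min\{0,D_b(t)+u_b\}\geq \min\{0,\bar D_b(t)+u_b\}.
\]
Hence $\bar D_b(t+1)^2\leq (\bar D_b(t)+u_b)^2$.

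Summing over all coordinates gives
\[
\norm{\bar D(t+1)}_2^2\leq\norm{\bar D(t)+u}_2^2=\norm{\bar D(t)}_2^2+2\langle\bar D(t),u\rangle+\norm{u}_2^2 .
\]
Since $\langle\bar D(t),u\rangle\leq 0$, the middle term can be dropped, which yields the one-step inequality.

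The only delicate point, and the main obstacle, is that $D(t)$ itself is not clipped: the update acts on $D(t)$, not on $\bar D(t)$. This is why the argument goes through the monotonicity comparison above rather than a direct expansion of $\norm{\bar D(t+1)}^2$. The case where a coordinate of $\bar D(t)$ is $0$ while $D(t)$ is strictly favorable is also handled by that comparison, because the signs of $\bar D_c\geq 0$ and $\bar D_b\leq 0$ are preserved.

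A minor check is that $\phi_{t+1}$ and $S_{t+1}$ are well defined whenever $A_t\neq\emptyset$. The minimum over the finitely many tours is attained, and for $\phi$ large enough $u_c^\phi(T)\leq 0$ because $\cgeneric(T)\leq\sigmageneric(T)$. Alternatively, this follows from \cref{lemma:greedy-rule}. In either case the minimum with value $\leq 0$ exists, so the selection rule supplies the needed sign condition.
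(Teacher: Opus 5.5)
Your proof is correct and is essentially the paper's argument: the paper also bounds $\norm{\bar D(t+1)}_2^2$ coordinatewise by $\norm{\bar D(t)+u^{\phi_{t+1}}(S_{t+1})}_2^2$, then drops the cross term using the selection rule and inducts; its case split on whether $\bar D_i(t)\neq 0$ (where $\bar D_i(t)=D_i(t)$) or $\bar D_i(t)=0$ is what your monotonicity-of-clipping comparison handles in one stroke. One caveat on your side remark about well-definedness: $u_c^\phi(T)\leq 0$ by itself does not give $\langle \bar D(t),u^\phi(T)\rangle\leq 0$, because when $\bar D_c(t)=0$ the inner product does not depend on $\phi$ and its sign is decided by the bucket coordinates, so only your alternative via \cref{lemma:greedy-rule} (e.g.\ with $\remainingtours=\lT^*$, where $\sigmageneric(A_t)>0$) actually justifies existence.
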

\begin{proof}
    We proceed by induction on $t$. For $t=0$, this is clear.
    For the induction step, let $I=\{c, r_1,\ldots, r_{\ceil{1/\epsbucket}}\}$ be the set of coordinates of the discrepancy vector.
    Then
    \begin{align*}
        \norm{\bar D(t+1)}_2^2
        &\leq\sum_{\substack{i \in I\text{ s.t.} \\ \bar D_i(t) \neq 0}}(\bar D_i(t)+u^{\phi_{t+1}}_i(S_{t+1}))^2
            + \sum_{\substack{i \in I\text{ s.t.} \\ \bar D_i(t)=0}}(u^{\phi_{t+1}}_i(S_{t+1}))^2 \\
        &=\norm{\bar D(t)}_2^2 + 2\cdot\underbrace{\langle \bar D(t),\ u^{\phi_{t+1}}(S_{t+1}) \rangle}_{\leq 0} + \norm{u^{\phi_{t+1}}(S_{t+1})}_2^2 \\
        &\leq \sum_{j=1}^{t+1} \norm{u^{\phi_j}(S_j)}_2^2.
    \end{align*}
    The last inequality follows from the induction hypothesis and line \ref{algline:average_greedy_tour_selection} in Algorithm \ref{alg:average_relative_greedy}. \qed
\end{proof}

\begin{lemma}\label{lemma:discrepancy_bound_final}
    For $t \in \{1,\ldots,k\}$, $\norm{\bar D(t)}_1 \leq O(\epsilon^{1/3}) \cdot \sigmageneric(V_\eta^1)$.
\end{lemma}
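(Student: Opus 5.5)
Starting from \cref{lemma:discrepancy_bound}, the plan is to bound the sum of squared deviations $\sum_{j\le t}\norm{u^{\phi_j}(S_j)}_2^2$ by a quantity of order $\epsilon\cdot\sigmageneric(V_\eta^1)^2$, and then pass to the $1$-norm via Cauchy--Schwarz. The vector has $1+\ceil{1/\epsbucket}=O(\epsilon^{-1/3})$ coordinates, so
\[
\norm{\bar D(t)}_1 \le \sqrt{1+\ceil{1/\epsbucket}}\;\norm{\bar D(t)}_2 \le O(\epsilon^{-1/6})\Bigl(\sum_{j\le t}\norm{u^{\phi_j}(S_j)}_2^2\Bigr)^{1/2}.
\]
It therefore suffices to show $\sum_j \norm{u^{\phi_j}(S_j)}_2^2 \le O(\epsilon^{2/3})\,\sigmageneric(V_\eta^1)^2$. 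This explains the choice $\epsbucket=\epsilon^{1/3}$: the bound $O(\epsilon)\cdot\sigmageneric(V_\eta^1)^2$ from the next step gives $\epsilon^{-1/6}\cdot\epsilon^{1/2}=\epsilon^{1/3}$.

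The key step is a per-tour bound of the form $\norm{u^{\phi_j}(S_j)}_1 \le O(1)\cdot\sigmageneric(S_j)$ together with $\max_j \norm{u^{\phi_j}(S_j)}_\infty \le O(\epsilon)\,\sigmageneric(V_\eta^1)$. Combined with $\sum_j \sigmageneric(S_j)\le\sigmageneric(V_\eta^1)$ (the $S_j$ are disjoint subsets of $V_\eta^1$), these give
\[
\sum_j\norm{u^{\phi_j}(S_j)}_2^2\le \sum_j \norm{u}_\infty\norm{u}_1 \le O(\epsilon)\,\sigmageneric(V_\eta^1)^2 .
\]

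For the coordinatewise estimates, I treat each coordinate separately.
\begin{itemize}
\item \emph{Cost coordinate.} We have $0\le \cgeneric(S_j)\le\sigmageneric(S_j)$. I also need $\phi_j\le 1$; this follows from \cref{lemma:greedy-rule} applied with $\remainingtours=\lT^*$, since $\cgeneric\le\sigmageneric$ tourwise. (If $\sigmageneric(\lT^*[A_t])=0$ the claim is vacuous, because $A_t\neq\emptyset$ and $R_0>0$ force positive potential.) Hence $|u_c|\le\sigmageneric(S_j)$.
\item \emph{Bucket coordinates.} We have $0\le r_b(S_j)\le R_0(S_j\cap V_{1/3}^1)$. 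Each client there has $d>1/3\ge\eta$, and its potential is at least a constant times $R_0$: for $d\ge 1/2$ it equals $R_0$, and otherwise $(2d-\delta)/(1-\delta)\ge(2/3-\delta)/(1-\delta)\ge 1/3$ for $\delta<1/2$. So $\sum_b r_b(S_j)\le 3\sigmageneric(S_j)$. Also $\sum_b \tilde r_b\le \sum_b r_b(V_\single)\le 3\sigmageneric(V_\eta^1)$, so the subtracted terms sum to at most $3\sigmageneric(S_j)$.
\end{itemize}
This gives the $\ell_1$ bound. For the $\ell_\infty$ bound, every coordinate is at most $O(\sigmageneric(S_j))$, so I need $\sigmageneric(S_j)\le O(\epsilon)\sigmageneric(V_\eta^1)$. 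On $V_\eta^1$ the potential is comparable to $R_1$ up to constants depending on $\delta$. Indeed, $\sigmageneric(v)\le 2R_1(v)/(1-\delta)$ holds for $d\le 1/2$, and $\sigmageneric(v)=R_0(v)=2R_1(v)$ holds for $d>1/2$. For the lower bound, clients with $d\le\delta$ satisfy $\sigmageneric(v)=R_1(v)/(1-\delta)$, and in all other cases $\sigmageneric\ge R_1$ similarly. Applying assumption \eqref{eq:any-tour-small-r1} then gives $\sigmageneric(S_j)\le O(1)R_1(S_j)\le O(\epsilon)R_1(V_\eta^1)\le O(\epsilon)\sigmageneric(V_\eta^1)$.

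The main obstacle is this last comparison and the bookkeeping of constants. They depend on $\delta$ (and possibly on $\eta$ when $\eta<\delta$, for clients in $V_\eta^\delta$), but these are fixed, so the dependence is absorbed in the $O(\cdot)$. One must also check that the $\ell_\infty$ bound really uses the small-$R_1$ assumption and not the possibly large single summand $\tilde r_b$ term: the latter is scaled by $\sigmageneric(S_j)/\sigmageneric(V_\eta^1)$, so it is fine.
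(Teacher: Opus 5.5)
Your proof is correct and follows essentially the same route as the paper. You bound each deviation vector by $O(\sigmageneric(S_j))$ in $\ell_1$, using $\phi_j\le 1$ from the greedy rule and $R_0\le O(\sigmageneric)$ on $V_{1/3}^1$; you then use \eqref{eq:any-tour-small-r1} to get $\sum_j\sigmageneric(S_j)^2\le O(\epsilon)\,\sigmageneric(V_\eta^1)^2$, and pass to $\ell_1$ at a cost of $\sqrt{1/\epsbucket}$. One small slip: the sufficient target is $O(\epsilon)\,\sigmageneric(V_\eta^1)^2$, not $O(\epsilon^{2/3})\,\sigmageneric(V_\eta^1)^2$ (the latter would only give $\epsilon^{1/6}$), but since you actually prove the $O(\epsilon)$ bound, the argument is unaffected.
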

\begin{proof}
    We start by bounding $\norm{u^{\phi_i}(S_i)}_2$ for $i \in \{1,\ldots,t\}$ so that we can apply \cref{lemma:discrepancy_bound}. For this, compute
    \begin{align*}
        \sum_{b=1}^{\ceil{1/\epsbucket}} \left|u^{\phi_i}_b(S_i)\right| &\leq R_0(S_i \cap V_{1/3}^1) + \sigmageneric(S_i) \frac{R_0(V_\single)}{\sigmageneric(V_\eta^1)} \leq 4 \cdot \sigmageneric(S_i)\;,
    \end{align*}
    where we use $\sum_b \tilde r_b(V_\single) \leq R_0(V_\single)$ in the first inequality and $R_0(W) \leq 3 R_1(W) \leq 2 \sigmageneric(W)$ for $W \subseteq V_{1/3}^1$ in the second inequality.
    Furthermore, by Lemma \ref{lemma:greedy-rule}, we have $\phi_t \leq 1$ (note that $c_\delta(T) \leq \sigmageneric(T)$ for any tour $T$ by construction) and therefore
    \[|u^{\phi_t}_c(S_t)| = \abs{\cgeneric(S_t) - \phi_t \cdot \sigmageneric(S_t)} \leq \sigmageneric(S_t)\;.\]
    Combining the above yields
    \begin{align*}
        \norm{u^{\phi_t}(S_t)}_2 &\leq \norm{u^{\phi_t}(S_t)}_1 \leq \sigmageneric(S_t) + 4 \cdot \sigmageneric(S_t) = 5 \cdot \sigmageneric(S_t)\;.
    \end{align*}
    Inserting this into \cref{lemma:discrepancy_bound} yields
    \[\norm{\bar D(t)}_2^2 \leq \sum_{i=1}^t 25\cdot \sigmageneric(S_i)^2 \leq 25\cdot\sigmageneric(\lS_t) \cdot \max_{i \in \{1,\ldots,t\}}\sigmageneric(S_i) \leq 50\cdot\epsilon\cdot \sigmageneric(V_\eta^1)^2\;,\]
    where the last inequality uses the assumption \eqref{eq:any-tour-small-r1}, which implies $\sigmageneric(T) \leq 2\epsilon\sigmageneric(V_\eta^1)$ for any tour $T$. This shows $\norm{\bar D(t)}_2 \leq O(\sqrt{\epsilon}) \cdot \sigmageneric(V_\eta^1)$. Together with the fact that the discrepancy vector has $O(1/\epsbucket)$ entries and $\sqrt{\epsilon}\cdot\sqrt{1 / \epsbucket} = \epsilon^{1/3}$ by definition of $\epsbucket$, we obtain $\norm{\bar D(t)}_1 \leq O(\epsilon^{1/3})\cdot \sigmageneric(V_\eta^1)$. \qed
\end{proof}

By using this bound on the one-sided discrepancy $\bar D$, we can now prove a variant of the informal bound \eqref{eq:R_expectations} we stated earlier:

\begin{lemma}\label{lemma:r0-r1-close-to-avg}
    Let $t \in \{1,\ldots,k\}$. Then there exists a set $B \subseteq \lS_t \cap V_{1/3}^1$ such that
    \begin{align*}
        \abs{R_0(B) - \frac{\sigmageneric(\lS_t) }{\sigmageneric(V_\eta^1)}\cdot R_0(V_\single)} &\leq O(\epsilon^{1/3}) \cdot \sigmageneric(V_\eta^1)\;, \\
        \abs{R_1(B) -   \frac{\sigmageneric(\lS_t)}{\sigmageneric(V_\eta^1)}\cdot R_1(V_\single)} &\leq O(\epsilon^{1/3}) \cdot \sigmageneric(V_\eta^1)\;.
    \end{align*}
\end{lemma}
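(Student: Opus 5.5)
The plan is to build $B$ bucket by bucket from the bucket bound on the discrepancy, which follows from \cref{lemma:discrepancy_bound_final}, and then compare $R_0$ and $R_1$ bucket-wise. Write $\lambda \coloneq \sigmageneric(\lS_t)/\sigmageneric(V_\eta^1) \in [0,1]$ and let $V^{(b)}$ denote the $b$-th demand bucket. By definition of $D_b$ we have
\[
D_b(t) = r_b(\lS_t) - \lambda\,\tilde r_b(V_\single),
\]
and since $\bar D_b(t)=\min\{0,D_b(t)\}$, \cref{lemma:discrepancy_bound_final} yields
\[
r_b(\lS_t) \ge \lambda\,\tilde r_b(V_\single) - \abs{\bar D_b(t)} .
\]
Combined with the guess accuracy \eqref{eq:rb-guess-accuracy}, this gives $r_b(\lS_t) \ge \lambda\, r_b(V_\single) - \epsilon\,\sigmageneric(V_\eta^1) - \abs{\bar D_b(t)}$.

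First step: construct $B_b \subseteq \lS_t \cap V^{(b)}$ with $R_0(B_b)$ close to $\lambda r_b(V_\single)$. If $r_b(\lS_t) \le \lambda r_b(V_\single)$, take all selected clients of bucket $b$. Otherwise, add selected clients of that bucket greedily until the total $R_0$ first exceeds $\lambda r_b(V_\single)$. This overshoots by at most one client's $R_0$. By \eqref{eq:any-tour-small-r1}, the singleton tour of a client $v \in V_{1/3}^1$ satisfies $R_0(v) \le 3R_1(v) \le 3\epsilon R_1(V_\eta^1) \le O(\epsilon)\,\sigmageneric(V_\eta^1)$, so the overshoot is $O(\epsilon)\,\sigmageneric(V_\eta^1)$. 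In both cases
\[
\abs{R_0(B_b) - \lambda r_b(V_\single)} \le \abs{\bar D_b(t)} + O(\epsilon)\,\sigmageneric(V_\eta^1).
\]
Setting $B = \bigcup_b B_b$ and summing over the $\ceil{1/\epsbucket}$ buckets, the error terms add up to $\norm{\bar D(t)}_1 + O(\epsilon/\epsbucket)\,\sigmageneric(V_\eta^1) = O(\epsilon^{1/3})\,\sigmageneric(V_\eta^1)$, using $\epsilon/\epsbucket = \epsilon^{2/3}$. Since $\sum_b r_b(V_\single) = R_0(V_\single)$, this proves the $R_0$ statement.

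Second step: $R_1$. For $v \in V^{(b)}$ we have $R_1(v) = \min\{d(v),\tfrac12\}\, R_0(v)$, and the factor $\min\{d,\tfrac12\}$ varies by at most $\tfrac23\epsbucket$ within a bucket. Fix a representative $\mu_b$, say the value at the bucket's lower endpoint. Then for any $W \subseteq V^{(b)}$,
\[
\abs{R_1(W) - \mu_b R_0(W)} \le \epsbucket R_0(W).
\]
Applying this to $W = B_b$ and to $W = V_\single \cap V^{(b)}$ gives
\[
\abs{R_1(B_b) - \lambda R_1(V_\single\cap V^{(b)})} \le \mu_b \abs{R_0(B_b) - \lambda r_b(V_\single)} + \epsbucket\bigl(R_0(B_b) + \lambda r_b(V_\single)\bigr).
\]
Summing over $b$, the first term is bounded by the $R_0$ error, since $\mu_b \le 1$. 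The second term is at most $\epsbucket\,(R_0(\lS_t\cap V_{1/3}^1) + R_0(V_\single)) \le \epsbucket \cdot 4\,\sigmageneric(V_\eta^1)$, using $R_0 \le 2\sigmageneric$ on $V_{1/3}^1$. This is $O(\epsilon^{1/3})\,\sigmageneric(V_\eta^1)$.

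The main obstacle is the bookkeeping in the first step. Only the lower side of $D_b$ is controlled, so the upper deviation has to be removed by truncation. This in turn requires the smallness of individual client radii, which comes from \eqref{eq:any-tour-small-r1} applied to singleton tours, and it requires tracking that every error stays $O(\epsilon^{1/3})$ after summing over $\ceil{1/\epsbucket}$ buckets.
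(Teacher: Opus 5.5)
Your proof is correct and follows essentially the same route as the paper. You build $B$ bucket by bucket as a truncation of $\lS_t\cap V_{1/3}^1$, lower-bound $r_b(\lS_t)$ via $\bar D_b(t)$ and \eqref{eq:rb-guess-accuracy}, absorb the one-client truncation error through $R_0(v)\le O(\epsilon)\,\sigmageneric(V_\eta^1)$ from \eqref{eq:any-tour-small-r1} summed over $\ceil{1/\epsbucket}$ buckets, and transfer to $R_1$ via a per-bucket representative demand with error at most $\epsbucket R_0\le 2\epsbucket\,\sigmageneric$. The one cosmetic difference is that you overshoot the true target $\frac{\sigmageneric(\lS_t)}{\sigmageneric(V_\eta^1)}\,r_b(V_\single)$, whereas the paper takes $B$ maximal subject to $r_b(B)\le \frac{\sigmageneric(\lS_t)}{\sigmageneric(V_\eta^1)}\,\tilde r_b(V_\single)$ and so undershoots. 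Both are fine, since $B$ is only needed existentially in the analysis.
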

\begin{proof}
    We choose $B \subseteq \lS_t \cap V_{1/3}^1$ maximal such that we don't exceed the average bucket mass, i.e.
    \[r_b(B) \leq \frac{\sigmageneric(\lS_t)}{\sigmageneric(V_\eta^1)}\cdot \tilde r_b(V_\single)\]
    for all $b \in \{1,\ldots, \ceil{1/\epsbucket}\}$. Since $r_b(v) \leq O(\epsilon) \cdot \sigmageneric(V_\eta^1)$ for $v \in V_{1/3}^1$ by \eqref{eq:any-tour-small-r1}, if this decreased $r_b(B)$, the inequality has slack at most $O(\epsilon) \cdot \sigmageneric(V_\eta^1)$ afterward.
    Combining this with the lower bound on $r_b(\lS_t)$ provided by $\bar D$, and using $\abs{r_b(V_\single) - \tilde r_b(V_\single)} \leq \epsilon\cdot \sigmageneric(V_\eta^1)$ from \eqref{eq:rb-guess-accuracy} yields
    \begin{equation}\label{eq:rb-diff-bound-explicit}
    \begin{aligned}
        \sum_{b=1}^{\ceil{1/\epsbucket}}\abs{r_b(B) - \frac{\sigmageneric(\lS_t)}{\sigmageneric(V_\eta^1)}\cdot  r_b(V_\single)}
        &\leq \norm{\bar D(\lS_t)}_1 + O(\ceil{1/\epsbucket}\cdot \epsilon) \cdot \sigmageneric(V_\eta^1)\\
        &\leq O(\epsilon^{1/3}) \cdot \sigmageneric(V_\eta^1) \;,
    \end{aligned}
    \end{equation}
    where the second bound uses \cref{lemma:discrepancy_bound_final}. Inserting the definition of $r_b$ immediately yields the bound on $R_0$ from the lemma statement. It remains to show the second bound. For any $W \subseteq V_{1/3}^1$, we have
    \begin{align*}
        \abs{\sum_{b=1}^{\ceil{1/\epsbucket}}\alpha_b \cdot r_b(W) - R_1(W)} \leq \epsbucket \cdot R_0(W) \leq 2\epsbucket\cdot \sigmageneric(W),
    \end{align*}
    where $\alpha_b \coloneq \min\{\tfrac13 + (b-1) \cdot \tfrac23 \epsbucket, \tfrac12\} \in [0,\tfrac12]$ is taken from the definition of $r_b$. Combining this with \eqref{eq:rb-diff-bound-explicit}, we obtain
    \begin{equation*}
    \begin{aligned}
    \abs{R_1(B) - \frac{\sigmageneric(\lS_t)}{\sigmageneric(V_\eta^1)}\cdot  R_1(V_\single)} &\leq O(\epsilon^{1/3})\cdot \sigmageneric(V_\eta^1) + 4\epsbucket\cdot \sigmageneric(V_\eta^1) \\
    &\leq O(\epsilon^{1/3}) \cdot \sigmageneric(V_\eta^1)\;,
    \end{aligned}
    \end{equation*}
    which finishes the proof. \qed
\end{proof}

We now proceed to bound the cost of the solution produced by Algorithm \ref{alg:average_relative_greedy}. For this, we need a cost density function which provides an upper bound on the cost-to-potential ratio $\cgeneric(S_t) / \sigmageneric(S_t)$ of the tours selected by the algorithm. By the definition of $\bar D$ and the fact that $\norm{\bar D}_1$ always stays small (\cref{lemma:discrepancy_bound_final}), this essentially corresponds to bounding $\phi_t$. The cost density may depend on the quantities $R_0(B_t),\ R_1(B_t)$, where $B_t \subseteq \lS_t \cap V_{1/3}^1$, and on $\sigmageneric(\lS_t)$. Note that by \cref{lemma:r0-r1-close-to-avg}, the former two quantities can be determined approximately from the third one. For technical reasons, we also require the cost density function to satisfy a certain continuity and monotonicity condition. All this is summarized in the following definition:

\begin{definition}\label{def:cost-density-function}
    For $s \in [0, \sigmageneric(V_\eta^1)]$, let
    \[\bar x(s) \coloneq \left(\frac{s}{\sigmageneric(V_\eta^1)}\cdot R_0(V_\single),\ \frac{s}{\sigmageneric(V_\eta^1)}\cdot R_1(V_\single),\ s\right) \in \bR^3_{\geq0}\;.\]
    A valid \emph{cost density function} is a function
    \[\Phi: [0, R_0(V_{1/3}^1)] \times [0, R_1(V_{1/3}^1)] \times [0, \sigmageneric(V_\eta^1)] \to [0,1]\]
    satisfying the conditions
    \begin{align}
        \phi_{t+1} &\leq \Phi(R_0(B_t), R_1(B_t), \sigmageneric(\lS_t))\;, &\textit{(cost bound)}\label{eq:cost_bound_condition}\\
        \Phi(\bar x(s) + (\lambda,\mu,0)) &\leq \Phi(\bar x(s)) + O\left(\frac{\abs{\lambda}+\abs{\mu}}{\sigmageneric(V_\eta^1)}\right)\;, &\textit{(continuity)}\label{eq:continuity_condition} \\
        \Phi(\bar x(s)) &\leq \Phi(\bar x(s'))\;, &\textit{(monotonicity)}\label{eq:monotonicity_condition}
    \end{align}
    for all $t \in \{1,\ldots,k\}$, $B_t \subseteq \lS_t \cap V_{1/3}^1$, $s,s' \in [0, \sigmageneric(V_\eta^1)]$ with $s \leq s'$ and all $\lambda,\mu \in \bR$ such that the function arguments are inside the domain.
\end{definition}

\begin{lemma}\label{lemma:cost_bound_generic_cost_density}
    Let $\Phi$ be a valid cost density function and $\bar x$ as in \cref{def:cost-density-function}. Then
    \[\sum_{t=1}^k \cgeneric(S_t) \leq \int_0^{\sigmageneric(V_\eta^1)} \Phi(\bar x(s))ds + O(\epsilon^{1/3} )\cdot\Opt\;.\]
\end{lemma}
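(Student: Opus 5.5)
We write the total truncated cost as the sum of the targets plus the final cost discrepancy. By definition of $u_c$,
\[
\sum_{t=1}^k \cgeneric(S_t) = \sum_{t=1}^k \phi_t\,\sigmageneric(S_t) + D_c(k) \leq \sum_{t=1}^k \phi_t\,\sigmageneric(S_t) + \bar D_c(k).
\]
By \cref{lemma:discrepancy_bound_final}, $\bar D_c(k)\le O(\epsilon^{1/3})\sigmageneric(V_\eta^1)$. Moreover $\sigmageneric(V_\eta^1)\le 2R_1(V_\eta^1)\le 2\Opt$, since $\sigmageneric(v)\le 2R_1(v)$ for $d(v)>\eta$ (after checking the constants in the three cases, possibly with a factor depending on $\delta$). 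Hence this error term is $O(\epsilon^{1/3})\Opt$. It remains to bound $\sum_t \phi_t\sigmageneric(S_t)$ by the integral.

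For each $t\in\{0,\dots,k-1\}$, set $s_t\coloneq\sigmageneric(\lS_t)$, so that $s_{t+1}-s_t=\sigmageneric(S_{t+1})$ and $s_k=\sigmageneric(V_\eta^1)$, since all of $V_\eta^1$ is eventually selected. Take $B_t$ from \cref{lemma:r0-r1-close-to-avg}; for $t=0$ take $B_0=\emptyset$, and the error is trivially zero. By the cost bound condition \eqref{eq:cost_bound_condition},
\[
\phi_{t+1}\le \Phi\bigl(R_0(B_t),R_1(B_t),s_t\bigr)=\Phi\bigl(\bar x(s_t)+(\lambda_t,\mu_t,0)\bigr),
\]
where $|\lambda_t|,|\mu_t|\le O(\epsilon^{1/3})\sigmageneric(V_\eta^1)$ by \cref{lemma:r0-r1-close-to-avg}. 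The continuity condition \eqref{eq:continuity_condition} then gives $\phi_{t+1}\le\Phi(\bar x(s_t))+O(\epsilon^{1/3})$.

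Multiplying by $\sigmageneric(S_{t+1})$ and summing over $t$, the additive errors contribute at most $O(\epsilon^{1/3})\sum_t\sigmageneric(S_{t+1})=O(\epsilon^{1/3})\sigmageneric(V_\eta^1)=O(\epsilon^{1/3})\Opt$. The main term is
\[
\sum_{t=0}^{k-1}\Phi(\bar x(s_t))\,(s_{t+1}-s_t)\le\sum_{t=0}^{k-1}\int_{s_t}^{s_{t+1}}\Phi(\bar x(s))\,ds=\int_0^{\sigmageneric(V_\eta^1)}\Phi(\bar x(s))\,ds .
\]
The inequality is the left Riemann sum bound, which holds because $s\mapsto\Phi(\bar x(s))$ is nondecreasing by the monotonicity condition \eqref{eq:monotonicity_condition}. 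This finishes the argument.

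The main subtlety is the indexing. The cost bound \eqref{eq:cost_bound_condition} relates $\phi_{t+1}$ to the state after $t$ tours, and that state is exactly the left endpoint of the interval $[s_t,s_{t+1}]$. This is why monotonicity lets us compare to the integral without any error from the size of individual tours. One must also confirm that \cref{lemma:r0-r1-close-to-avg} is applicable at every intermediate $t$ (including $t=0$ separately) and that the arguments stay in the domain of $\Phi$. The latter holds since $B_t\subseteq V_{1/3}^1$ and $s_t\le\sigmageneric(V_\eta^1)$.
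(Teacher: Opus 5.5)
Your proof is correct and follows the paper's argument essentially step for step. You split off the cost discrepancy $\bar D_c(k)$, apply the cost bound at the left endpoint $\sigmageneric(\lS_t)$, pass to $\bar x(\sigmageneric(\lS_t))$ via \cref{lemma:r0-r1-close-to-avg} and continuity, and finish with monotonicity (a left Riemann sum) and $\sigmageneric(V_\eta^1)\le 2\Opt$. You also handle $t=0$ explicitly via $B_0=\emptyset$, which the paper leaves implicit, and the bound $\sigmageneric(v)\le 2R_1(v)$ in fact holds in all three cases with no $\delta$-dependent factor, so your hedge there is unnecessary.
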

\begin{proof}
For $t \in \{1,\ldots,k\}$, let $B_t \subseteq \lS_t \cap V_{1/3}^1$ be a set satisfying the conditions of \cref{lemma:r0-r1-close-to-avg}. Together with the continuity assumption \eqref{eq:continuity_condition}, we obtain
\begin{equation}\label{eq:continuity_condition_applied}
    \Phi(R_0(B_t), R_1(B_t), \sigmageneric(\lS_t)) \leq \Phi(\bar x(\sigmageneric(\lS_t))) + O(\epsilon^{1/3})\;.
\end{equation}
The definition of $\bar D$ implies the cost bound
\begin{align*}
    \sum_{t=1}^k \cgeneric(S_t) &\leq \sum_{t=1}^k \sigmageneric(S_t)\cdot \phi_t + \norm{\bar D(k)}_1 \\
    &\stackalign{\eqref{eq:cost_bound_condition}}{\leq} \left(\sum_{t=1}^k \sigmageneric(S_t) \cdot \Phi(R_0(B_{t-1}), R_1(B_{t-1}), \sigmageneric(\lS_{t-1})) \right) + \norm{\bar D(k)}_1 \\
    &\stackalign{\eqref{eq:continuity_condition_applied}}{\leq} \left(\sum_{t=1}^k \sigmageneric(S_t)\cdot \Phi(\bar x(\sigmageneric(\lS_{t-1}))) \right) + O(\epsilon^{1/3}) \cdot \sigmageneric(V_\eta^1) + \norm{\bar D(k)}_1 \\
    &\stackalign{\cref{lemma:discrepancy_bound_final}}{\leq}\qquad \left(\sum_{t=1}^k \sigmageneric(S_t)\cdot \Phi(\bar x(\sigmageneric(\lS_{t-1}))) \right) + O(\epsilon^{1/3})\cdot \sigmageneric(V_\eta^1) \\
    &\stackalign{\eqref{eq:monotonicity_condition}}{\leq} \int_0^{\sigmageneric(V_\eta^1)} \Phi(\bar x(s))ds + O(\epsilon^{1/3}) \cdot \sigmageneric(V_\eta^1)\;.
\end{align*}
Using $\sigmageneric(V_\eta^1) \leq 2 \Opt$ finishes the proof. \qed
\end{proof}

\section{Cost Density Functions}\label{sec:cost-densities}
In this section, we derive valid cost density functions as defined in \cref{def:cost-density-function}. By \cref{lemma:cost_bound_generic_cost_density}, such a cost density immediately implies a cost bound for Algorithm \ref{alg:average_relative_greedy}. The main property we have to show is the bound on $\phi_t$ in \eqref{eq:cost_bound_condition}. The other two conditions will follow from a relatively straightforward computation in \cref{lemma:continuity_monotonicity_check}. We will derive two different cost density functions in \cref{lemma:cost_density_bound_simple} and \cref{lemma:cost_density_bound_complex}.

Recall that $\lS_t$ denotes the set of tours selected after $t$ iterations and $A_t \coloneq V_\eta^1\setminus \lS_t$ the available clients after $t$ iterations. In order to show a bound $\phi_{t+1} \leq \Phi \coloneq \Phi(R_0(B), R_1(B), \sigmageneric(\lS_t))$ for all $B \subseteq \lS_t \cap V_{1/3}^1$, we use the following approach: By \cref{lemma:greedy-rule}, we have $\phi_{t+1} \leq \cgeneric(\remainingtours[A_t]) / \sigmageneric(\remainingtours[A_t])$, where $\remainingtours \subseteq \lT^*$ is a subset of the optimum solution with $\remainingtours \supseteq \{T \in \lT^*\ |\ T \cap V_{1/3}^1 \subseteq A_t\}$. Therefore, it suffices to show
\begin{equation}\label{eq:rem-tours-density-non-positive}
    \cgeneric(\remainingtours[A_t]) - \Phi\cdot \sigmageneric(\remainingtours[A_t]) \leq 0\;.
\end{equation}
For a fixed set $B$, define the set of discarded tours as
\[\lT^*_\mathrm{dis} \coloneq \{T \in \lT^*\ |\ T \cap B \neq \emptyset\text{ and }\cgeneric(T) > \Phi \cdot \sigmageneric(T[A_t])\}\;.\]
Setting $\remainingtours = \lT^* \setminus \lT^*_\mathrm{dis}$, the left-hand side of \eqref{eq:rem-tours-density-non-positive} is bounded by
\begin{equation}\label{eq:phi-condition}
    \Opt - \Phi \cdot (\sigmageneric(V_\eta^1) - \sigmageneric(\lS_t)) - \sum_{\substack{T \in \lT^*\\ T \cap B \neq \emptyset}}\max\{0, c(T) - \Phi \cdot \sigmageneric(T[A_t])\}\;.
\end{equation}
Thus, it suffices to show that \eqref{eq:phi-condition} is non-positive.
\begin{remark}\label{remark:negative-denom-infinity}
For the statements of \cref{lemma:cost_density_bound_simple,lemma:cost_density_bound_complex}, we define fractions with a non-positive denominator to be infinity by convention. This simplifies notation since the denominator corresponds to a lower bound on $\sigmageneric(\remainingtours)$. If this lower bound is not positive, we do not obtain any bound on $\cgeneric(\remainingtours) / \sigmageneric(\remainingtours)$.
\end{remark}

\begin{lemma}\label{lemma:cost_density_bound_simple}
    Fix an iteration $t \in \{0,\ldots,k-1\}$ and let $B \subseteq \lS_t \cap V_{1/3}^1$. Then, with the convention from \cref{remark:negative-denom-infinity} and
    \[\Phi^{(1)}_\delta \coloneq \Phi^{(1)}_\delta\left(R_0(B), R_1(B), \sigmageneric(\lS_t)\right) \coloneq \frac{\Opt - R_0(B)}{\sigmageneric(V_\eta^1) - 2(R_0(B) - R_1(B)) - \sigmageneric(\lS_t)},\]
    we have $\phi_{t+1} \leq \max\{\frac12, \Phi^{(1)}_\delta\}$.
\end{lemma}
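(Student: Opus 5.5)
We will show that $\Phi\coloneq\max\{\tfrac12,\Phi^{(1)}_\delta\}$ makes \eqref{eq:phi-condition} non-positive. By \cref{lemma:greedy-rule} with $\remainingtours=\lT^*\setminus\discardedtours$, as explained before \cref{remark:negative-denom-infinity}, this gives $\phi_{t+1}\le\Phi$.

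\emph{Trivial cases.} If $\Phi\ge 1$ there is nothing to prove, since $\phi_{t+1}\le 1$ because $\cgeneric\le\sigmageneric$. The same holds if the denominator of $\Phi^{(1)}_\delta$ is non-positive, by the convention of \cref{remark:negative-denom-infinity}. So we may assume $\Phi\in[\tfrac12,1]$ and a positive denominator.

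\emph{Reduction to a per-tour inequality.} By the definition of $\Phi^{(1)}_\delta$, and since $\Phi\ge\Phi^{(1)}_\delta$ with positive denominator,
\[\Opt-\Phi\bigl(\sigmageneric(V_\eta^1)-\sigmageneric(\lS_t)\bigr)\le R_0(B)-2\Phi\bigl(R_0(B)-R_1(B)\bigr).\]
It therefore suffices to prove, for every $T\in\lT^*$ with $T\cap B\neq\emptyset$ and $B_T\coloneq T\cap B$,
\begin{equation*}
\max\{0,\,c(T)-\Phi\,\sigmageneric(T[A_t])\}\;\ge\; R_0(B_T)-2\Phi\bigl(R_0(B_T)-R_1(B_T)\bigr),
\end{equation*}
and then sum over these tours. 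Since the sets $B_T$ partition $B$, the right-hand sides add up to $R_0(B)-2\Phi(R_0(B)-R_1(B))$. Writing $d'(x)=\min\{d(x),\tfrac12\}$, the right-hand side equals $\sum_{x\in B_T}R_0(x)\bigl(1-2\Phi(1-d'(x))\bigr)$.

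\emph{Bounding the left-hand side.} Every client $u$ satisfies $\sigmageneric(u)\le 2R_1(u)\le 2d(u)R_0(u)$. This is checked case by case from \cref{lem:delta_tank}, using $\delta<\tfrac12$ and, in the middle case, $2d\le 1$. Together with $R_0(u)\le c(T)$ for $u\in T$ and $T[A_t]\subseteq T\setminus B_T$, this yields
\[\sigmageneric(T[A_t])\le 2\bigl(1-d(B_T)\bigr)c(T).\]
Hence the left-hand side is at least $\max\{0,\,c(T)(1-2\Phi(1-d(B_T)))\}$. Since $B_T\subseteq V_{1/3}^1$ and $d(T)\le1$, we have $|B_T|\in\{1,2\}$.

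\emph{Case $|B_T|=1$, $B_T=\{v\}$.}
\begin{itemize}
\item If $d(v)\le\tfrac12$, the target is $R_0(v)(1-2\Phi(1-d(v)))$, which has the same coefficient as the lower bound. If the coefficient is negative, the target is negative and the $\max\{0,\cdot\}$ suffices. Otherwise $c(T)\ge R_0(v)$ finishes the case.
\item If $d(v)>\tfrac12$, the target is $R_0(v)(1-\Phi)\ge0$. Our coefficient satisfies $1-2\Phi(1-d(v))\ge 1-\Phi$, so $c(T)\ge R_0(v)$ finishes the case.
\end{itemize}

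\emph{Case $B_T=\{v,w\}$.} This is the main obstacle, because $c(T)$ need not dominate $R_0(v)+R_0(w)$. Both sides of the comparison $c(T)\bigl(1-2\Phi(1-d(B_T))\bigr)\ge\sum_{x\in B_T}R_0(x)(1-2\Phi(1-d'(x)))$ are affine in $\Phi$, so it suffices to check the endpoints $\Phi=\tfrac12$ and $\Phi=1$.
\begin{itemize}
\item At $\Phi=\tfrac12$ the comparison reads $d(B_T)c(T)\ge R_1(v)+R_1(w)$, which holds since $R_0(x)\le c(T)$ and $d'(x)\le d(x)$.
\item At $\Phi=1$ we have $d(v),d(w)\le\tfrac23$ and $d(v)+d(w)\le1$, so at most one of them exceeds $\tfrac12$. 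If neither does, every term on the right is $\le0$ while the left side $c(T)(2d(B_T)-1)\ge0$. If, say, $d(v)>\tfrac12$, the right side is at most $R_0(w)(2d(w)-1)\le0$, so again the left side dominates.
\end{itemize}
I expect the careful verification of this two-client case, including the switch between $d$ and $d'$, to be the only delicate step.

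Summing the per-tour inequalities over all $T$ with $T\cap B\neq\emptyset$ and combining with the displayed bound on $\Opt$ shows that \eqref{eq:phi-condition} is non-positive, which proves the lemma.
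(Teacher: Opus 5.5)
Your proof is correct and follows the paper's argument. It uses the same reduction to non-positivity of \eqref{eq:phi-condition}, the same per-tour target $\sum_{v\in T\cap B}\bigl(1-2\Phi(1-\hat d(v))\bigr)R_0(v)$ as in \eqref{eq:bound_c-phisigma}, and the same bound $\sigmageneric(T[A_t])\le 2(1-d(B_T))\,c(T)$. The only difference is how you verify that per-tour bound: the paper avoids your case split, the affine endpoint check, and the preliminary reduction to $\Phi\le 1$ by restricting to the positive-coefficient set $J$ and using the identity $1-2\Phi(1-\hat d(J))=\sum_{v\in J}a_v+(|J|-1)(2\Phi-1)$ together with $2\Phi-1\ge 0$.
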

\begin{proof}
    As shown at the beginning of the section, it suffices to show that for $\Phi = \max\{\frac12, \Phi^{(1)}_\delta\}$, \eqref{eq:phi-condition} is non-positive. Setting $\hat d(v) = \min\{d(v), \tfrac12\}$, we show
    \begin{equation}\label{eq:bound_c-phisigma}
        \max\{0, c(T) - \Phi\cdot\sigmageneric(T[A_t])\} \geq\sum_{v \in T \cap B}\underbrace{\left(1- 2\Phi(1- \hat d(v))\right)}_{\eqcolon a_v}R_0(v)
    \end{equation}
    for any tour $T \in \lT^*$. Set $J \coloneq \{v \in T \cap B \mid a_v > 0\}$. If $J = \emptyset$, \eqref{eq:bound_c-phisigma} is immediate. Otherwise, using $\sigmageneric(T[A_t])\leq 2(1-\hat d(J))c(T)$ and $c(T)\geq R_0(v)$,
    \begin{align*}
        c(T) - \Phi\sigmageneric(T[A_t]) &\geq \left[1 - 2 \Phi (1-\hat d(J))\right]c(T)\\
        & = \left[\sum_{v \in J} a_v + (|J| -1)(2\Phi -1)\right]c(T) \geq \sum_{v \in T \cap B}a_vR_0(v)\;,
    \end{align*}
    where the last inequality uses $2\Phi -1 \geq 0$. This proves \eqref{eq:bound_c-phisigma}.
    Therefore, the term in \eqref{eq:phi-condition} is at most
    \begin{align*}
        \Opt - R_0(B) - \Phi \left[\sigmageneric(V_\eta^1) - \sigmageneric(\lS_t) - 2(R_0(B)-R_1(B))\right],
    \end{align*}
    By definition of $\Phi$, this is at most zero, which finishes the proof. \qed
\end{proof}

\begin{lemma}\label{lemma:cost_density_bound_complex}
    Fix an iteration $t \in \{0,\ldots,k-1\}$ and let $B \subseteq \lS_t \cap V_{1/3}^1$. Then, with the convention from \cref{remark:negative-denom-infinity}, $\delta = 1/3$, and
    \begin{align*}
        \Phi^{(2)}_{1/3} 
        \coloneq \frac{\Opt - R_0(B)}{\sigma_{1/3}(V_\eta^1) - \frac32 (R_0(B) - R_1(B))- \sigma_{1/3}(\lS_t) - \frac18 R_0(V_\double)}\;,
    \end{align*}
    we have $\phi_{t+1} \leq \max\{\frac47, \Phi^{(2)}_{1/3}\}$. As before, $\Phi^{(2)}_{1/3}$ can be viewed as a function of $R_0(B)$, $R_1(B)$ and $\sigmageneric(\lS_t)$.
\end{lemma}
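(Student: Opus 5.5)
We follow the template of the proof of \cref{lemma:cost_density_bound_simple}. By \cref{lemma:greedy-rule} and the discussion at the beginning of this section, it suffices to show that \eqref{eq:phi-condition} is non-positive for $\Phi \coloneq \max\{\tfrac47, \Phi^{(2)}_{1/3}\}$ and $\delta = 1/3$. Recall that $\sigma_{1/3}(v) = \tfrac32 d(v) R_0(v)$ for $v \in V_0^{1/3}$, $\sigma_{1/3}(v) = (3d(v) - \tfrac12)R_0(v)$ for $v \in V_{1/3}^{1/2}$, and $\sigma_{1/3}(v)=R_0(v)$ otherwise. Since $R_0(v) \le c(T)$ for $v \in T$, every client with demand at most $1/3$ contributes at most $\tfrac32 d(v)\, c(T)$ to $\sigma_{1/3}(T[A_t])$. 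This is where the factor $\tfrac32$ replaces the factor $2$ of \cref{lemma:cost_density_bound_simple}.

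The key step is a per-tour inequality analogous to \eqref{eq:bound_c-phisigma}. With $\hat d(v) = \min\{d(v),\tfrac12\}$ and $a_v \coloneq 1 - \tfrac32\Phi(1-\hat d(v))$, I would show for every $T \in \lT^*$ with $T \cap B \neq \emptyset$:
\[
  \max\{0, c(T) - \Phi\,\sigma_{1/3}(T[A_t])\} \;\ge\; \sum_{v \in T\cap B} a_v R_0(v) \;-\; \tfrac{\Phi}{8}\, R_0(T \cap V_\double).
\]
Summing this over all such tours turns \eqref{eq:phi-condition} into
\[
\Opt - R_0(B) - \Phi\Bigl[\sigma_{1/3}(V_\eta^1) - \sigma_{1/3}(\lS_t) - \tfrac32(R_0(B)-R_1(B)) - \tfrac18 R_0(V_\double)\Bigr],
\]
where we used $\sum_{v\in B} a_v R_0(v) = R_0(B) - \tfrac32\Phi\,(R_0(B)-R_1(B))$. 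This expression is non-positive by the definition of $\Phi^{(2)}_{1/3}$, since $\Phi \geq \Phi^{(2)}_{1/3}$, and it is trivially non-positive when the denominator is not positive, by the convention of \cref{remark:negative-denom-infinity}. If the per-tour inequality turns out to need slack also from double tours not meeting $B$, I would instead absorb the $\tfrac18 R_0$ term by bounding $\sigma_{1/3}(T)$ for such tours directly. The summed form above shows, however, that only tours meeting $B$ need to pay.

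The per-tour inequality is proved by cases on $|T \cap V_{1/3}^1|$.
\begin{itemize}
\item \emph{Single tours}, $T \cap V_{1/3}^1 = \{v\}$ with $v\in B$. All other clients of $T$ have demand at most $1/3$ and total demand at most $1-d(v) \le 1-\hat d(v)$. Hence $\sigma_{1/3}(T[A_t]) \le \tfrac32(1-\hat d(v))\,c(T)$, and the claim follows from $c(T)\ge R_0(v)$ whenever $a_v>0$. If $a_v \le 0$, the claim is trivial.
\item \emph{Double tours with both big clients in $B$.} Argue as with $J$ in \cref{lemma:cost_density_bound_simple}: the remaining clients are small, so $\sigma_{1/3}(T[A_t]) \le \tfrac32(1-\hat d(J))\,c(T)$. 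Then $1-\tfrac32\Phi(1-\hat d(J)) = \sum_{v\in J}a_v + (|J|-1)(\tfrac32\Phi-1)$. The last term may be negative (since $\tfrac32\cdot\tfrac47<1$), and this negative part has to be paid by the $-\tfrac{\Phi}{8}(R_0(v)+R_0(w))$ slack, using $R_0(v),R_0(w)\le c(T)$.
\item \emph{Double tours $\{v,w\}$ with only $v \in B$.} Now $w\in A_t$ may be a client in $V_{1/3}^{1/2}$ whose potential $(3d(w)-\tfrac12)R_0(w)$ exceeds the ``small-client rate'' $\tfrac32 d(w)R_0(w)$ by $(\tfrac32 d(w)-\tfrac12)R_0(w)$. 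Since $d(w) \le 1-d(v) < 2/3$, this excess is bounded. We must show that $\Phi$ times this excess is covered by the slack $\tfrac{\Phi}{8}(R_0(v)+R_0(w))$ plus whatever is gained from $\max\{0,\cdot\}$ in the regime $a_v$ small, using $\Phi\ge\tfrac47$.
\end{itemize}

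I expect this last case (together with the $|J|=2$ case) to be the main obstacle. It reduces to a two- or three-variable inequality in $d(v), d(w)\in(1/3,2/3)$, $\Phi\ge 4/7$ and the ratios $R_0(v)/c(T), R_0(w)/c(T)\in[0,1]$, where the constants $\tfrac47$ and $\tfrac18$ should be exactly tight. My first attempt would be to fix $\Phi$, note that the expression is linear in each ratio $R_0(\cdot)/c(T)$ so extremes occur at $0$ or $1$, and check the resulting corner cases in $d(v), d(w)$ by hand. In doing so I would use $\max\{0,\cdot\}\ge 0$ whenever $a_v \le \tfrac{\Phi}{8}$ makes the right-hand side non-positive.
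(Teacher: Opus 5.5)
Your architecture is exactly the paper's: the same reduction to non-positivity of \eqref{eq:phi-condition}, the same per-tour inequality with $a_v = 1-\tfrac32\Phi(1-\hat d(v))$ and the correction $-\tfrac{\Phi}{8}R_0(T\cap V_\double)$, the same summation, and the same three cases. Your single-tour case is complete. The gap is the double-tour case with $T\cap B=\{v\}$, which carries the lemma's content: you only announce it and defer it to a corner check. Your description of it is also partly off. First, $\Phi\ge\frac47$ is not used there at all. Second, the relevant cap is $\hat d(w)\le\frac12$, not $d(w)<\frac23$; for $d(w)>\frac12$ the potential is $R_0(w)$, which the $\hat d$ formula covers. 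What is needed is the bound
\[
\sigma_{1/3}(T[A_t]) \le \tfrac32\bigl(1-\hat d(v)-\hat d(w)\bigr)c(T)+\bigl(3\hat d(w)-\tfrac12\bigr)R_0(w),
\]
which holds whether or not $w\in A_t$, followed by one charging step. If $a_v\le\frac{\Phi}{8}$ there is nothing to show. Otherwise $c(T)-\Phi\,\sigma_{1/3}(T[A_t])\ge(a_v+\tfrac32\Phi\hat d(w))c(T)-\Phi(3\hat d(w)-\tfrac12)R_0(w)$. Spending $\Phi(\tfrac32\hat d(w)+\tfrac18)c(T)\ge\Phi(\tfrac32\hat d(w)+\tfrac18)R_0(w)$ then leaves $(a_v-\tfrac{\Phi}{8})c(T)-\Phi(\tfrac32\hat d(w)-\tfrac58)R_0(w)\ge(a_v-\tfrac{\Phi}{8})R_0(v)-\tfrac{\Phi}{8}R_0(w)$. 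In words, the $\frac{\Phi}{8}$ slack attached to $v$ is converted into slack for $w$ through $c(T)\ge R_0(w)$. This is exactly where the constant $\frac18$ is needed: $\tfrac32\hat d(w)-\tfrac58\le\tfrac18$, with equality at $\hat d(w)=\tfrac12$.

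For $|T\cap B|=2$ your sketch points the right way, but you should write out the identity $1-\tfrac32\Phi(1-\hat d(v)-\hat d(w))=(a_v-\tfrac{\Phi}{8})+(a_w-\tfrac{\Phi}{8})+(\tfrac74\Phi-1)$. If both bracketed coefficients are positive, use $\tfrac74\Phi-1\ge0$ and then replace $c(T)$ by $R_0(v)$ and $R_0(w)$. This is the only place $\Phi\ge\frac47$ enters, and it is tight there. If only $a_v-\frac{\Phi}{8}>0$, drop $\hat d(w)$ to get at least $a_v c(T)\ge a_vR_0(v)$. If neither is positive, the claim is trivial. Your corner-check plan would eventually recover this, since after splitting on the sign of $a_v-\frac{\Phi}{8}$ the required inequality is linear in $R_0(v)/c(T)$ and $R_0(w)/c(T)$. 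As written, however, both double-tour cases remain unproved.
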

\begin{proof}
    We proceed similarly to \cref{lemma:cost_density_bound_simple} and show that \eqref{eq:phi-condition} is non-positive for $\Phi = \max\{\frac47, \Phi^{(2)}_{1/3}\}$. Set again $\hat d(v) \coloneq \min\{d(v), \tfrac12\}$. We claim
    \begin{equation}\label{eq:bound-c-phi-sigma-third}
    \begin{aligned}
    \max\{0, c(T) &- \Phi \cdot \sigmathird(T[A_t])\} \\
        \geq &\sum_{v \in T\cap B}\Big(\underbrace{ 1- \tfrac32 \Phi \left(1 - \hat d(v)\right)}_{\eqcolon a_v}\Big) R_0(v)
        - \tfrac{\Phi}{8}R_0(T \cap V_\double)\;.
    \end{aligned}
    \end{equation}
    If none of the coefficients $a_v$ is positive, the inequality is trivial. If $T \in \lT^*_\single$ and $T \cap B = \{v\}$, then
    \[\sigmathird(T[A_t]) \leq \tfrac32 \bigl(1-\hat d(v)\bigr)c(T)\;,\]
    which, together with $c(T) \geq R_0(v)$, implies \eqref{eq:bound-c-phi-sigma-third}. It remains to show the inequality for $T \in \lT^*_\double$, i.e.\ $T \cap V_{1/3}^1 = \{v,w\}$. If $\abs{T \cap B}=1$, say $T \cap B = \{v\}$, then
    \begin{align*}
        \sigmathird(T[A_t]) \leq \tfrac32 \left(1-\hat d(v) - \hat d(w)\right)c(T)
         + \left(3\hat d(w) - \tfrac12\right)R_0(w)
    \end{align*}
    If $a_v - \frac{\Phi}{8} \leq 0$, then \eqref{eq:bound-c-phi-sigma-third} is trivial. Otherwise,
    \begin{align*}
        &c(T) - \Phi \cdot \sigmathird(T[A_t]) \\
        &\geq\ \left(1-\tfrac32\Phi\bigl(1-\hat d(v) - \hat d(w)\bigr)\right)c(T)  - \Phi\left(3\hat d(w) - \tfrac12\right)R_0(w) \\
        &=\ \left(a_v + \tfrac32 \Phi \cdot\hat d(w)\right)c(T) - \Phi\left(3\hat d(w) - \tfrac12\right)R_0(w) \\
        &\geq\ \left(a_v - \tfrac{\Phi}{8}\right)c(T) - \Phi\left(\tfrac32 \hat d(w) - \tfrac58\right)R_0(w) \\
        &\geq\ \left(a_v - \tfrac{\Phi}{8}\right)c(T) - \tfrac{\Phi}{8} R_0(w) \\
        &\geq\ a_v R_0(v) - \tfrac{\Phi}{8}R_0(T \cap V_\double)\;,
    \end{align*}
    which shows \eqref{eq:bound-c-phi-sigma-third}. If $T \cap B = \{v,w\}$, then
    \begin{equation}\label{eq:bound-savings-complex-tour-twice}
        \max\{0, c(T) - \Phi \cdot \sigmathird(T[A_t])\} \geq \left(1 - \tfrac32\Phi(1 - \hat d(v) - \hat d(w))\right) c(T)\;.
    \end{equation}
    Consider the coefficients $a_v - \frac{\Phi}{8}$ and $a_w - \frac{\Phi}{8}$ of $R_0(\cdot)$ on the right-hand side of \eqref{eq:bound-c-phi-sigma-third}. If neither term is positive, \eqref{eq:bound-c-phi-sigma-third} is obvious. If exactly one is positive, say $a_v - \frac{\Phi}{8}$, then it follows from \eqref{eq:bound-savings-complex-tour-twice} by omitting $\hat d(w)$ and replacing $c(T)$ by $R_0(v)$. If both terms are positive, then \eqref{eq:bound-c-phi-sigma-third} can be seen by comparing the sum of coefficients
    \begin{align*}
        &1 - \tfrac32 \Phi \bigl(1- \hat d(v) - \hat d(w)\bigr) \\
        =\ &\left(1 - \tfrac32\Phi\bigl(1 - \hat d(v)\bigr) - \tfrac{\Phi}{8}\right) + \left(1 - \tfrac32\Phi\bigl(1 - \hat d(w)\bigr) - \tfrac{\Phi}{8}\right) + \tfrac74\Phi - 1
    \end{align*}
    where $\tfrac74\Phi -1 \geq 0$ by definition, and then replacing $c(T)$ by $R_0(v)$ and $R_0(w)$.

    Finally, by inserting \eqref{eq:bound-c-phi-sigma-third}, we see that \eqref{eq:phi-condition} is at most
    \[\Opt - R_0(B) - \Phi \left[\sigmathird(V_\eta^1) - \sigmathird(\lS_t)  - \tfrac32 (R_0(B) - R_1(B)) - \tfrac{1}{8}R_0(V_\double)\right]\;,\]
    which finishes the proof by definition of $\Phi^{(2)}_{1/3}$ and $\Phi = \max\{\frac47, \Phi^{(2)}_{1/3}\}$. \qed
\end{proof}

\begin{lemma}\label{lemma:continuity_monotonicity_check}
    The functions
    \[\min\left\{\max\left\{\tfrac12, \Phi^{(1)}_\eta\right\}, 1\right\}, \qquad \min\left\{\max\left\{\tfrac12, \Phi^{(1)}_{1/3}\right\}, \max\left\{\tfrac47, \Phi^{(2)}_{1/3}\right\}, 1\right\}\]
    satisfy the requirements \eqref{eq:continuity_condition} and \eqref{eq:monotonicity_condition} for a cost density function from \cref{def:cost-density-function}.
\end{lemma}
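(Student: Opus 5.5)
Monotonicity and continuity are each checked on the building blocks and then transferred to the combined functions. For the transfer, pointwise $\min$ and $\max$ of nondecreasing functions are nondecreasing. Likewise, if each $f_i$ satisfies $f_i(\bar x(s)+(\lambda,\mu,0))\le f_i(\bar x(s))+L_i(\abs{\lambda}+\abs{\mu})/\sigmageneric(V_\eta^1)$, then so do $\max_i f_i$ and $\min_i f_i$ with constant $\max_i L_i$. Constants trivially satisfy both conditions, so it suffices to treat the clipped versions $\min\{\max\{c,\Phi^{(j)}\},1\}$ with $c\in\{\tfrac12,\tfrac47\}$. Throughout, the convention of \cref{remark:negative-denom-infinity} applies: a nonpositive denominator means value $+\infty$, hence value $1$ after clipping.

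\textbf{Monotonicity.} Along the curve $\bar x(s)$, write $\rho=s/\sigmageneric(V_\eta^1)\in[0,1]$. Then $\Phi^{(1)}_\delta(\bar x(s))=(\Opt-\rho R_0(V_\single))/(\sigmageneric(V_\eta^1)-\rho\,[2(R_0(V_\single)-R_1(V_\single))+\sigmageneric(V_\eta^1)])$, and $\Phi^{(2)}_{1/3}$ has the same shape, $(N_0-\rho a)/(M_0-\rho b)$. For $\Phi^{(1)}_\delta$ we have $a=R_0(V_\single)$ and $b=\sigmageneric(V_\eta^1)+2(R_0-R_1)(V_\single)$; for $\Phi^{(2)}_{1/3}$ we have $b=\sigma_{1/3}(V_\eta^1)+\tfrac32(R_0-R_1)(V_\single)$ and $M_0$ reduced by $\tfrac18R_0(V_\double)$.

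While the denominator is positive, the derivative in $\rho$ has the sign of $N_0b-M_0a$. Equivalently, the function is nondecreasing iff $\Phi(\bar x(0))=N_0/M_0\ge a/b$. That $a/b\le\tfrac12$ (resp.\ $\le\tfrac47$) would suffice: whenever $\Phi\ge a/b$ it increases, and where it lies below the clipping floor the floor dominates. For $\Phi^{(1)}_\delta$, $b\ge 2R_0-2R_1\ge R_0=a$ follows from $R_1\le\tfrac12R_0$, but that only gives $a/b\le 1$, which is too weak. The stronger bound $b\ge 2a$ follows from $\sigmageneric(V_\single)\ge \sigmageneric$-value $\ge 2R_1(V_\single)$ for clients with demand $>1/3\ge\delta$: then $b\ge 2R_1+2R_0-2R_1=2R_0=2a$. (Here $\sigma_\delta(v)\ge 2d(v)R_0(v)$ holds in all cases, since $(2d-\delta)/(1-\delta)\ge 2d$ for $d\le\tfrac12$.) For $\Phi^{(2)}_{1/3}$, the analogous bound $b\ge 2R_1+\tfrac32R_0-\tfrac32R_1\ge \tfrac74R_0$ gives $a/b\le\tfrac47$, exactly the floor. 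So on each piece the clipped function is nondecreasing. Past a pole the denominator is nonpositive and the value is $1$; since the denominator decreases in $\rho$, once it is nonpositive it stays so. Hence the clipped function is nondecreasing on all of $[0,\sigmageneric(V_\eta^1)]$.

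\textbf{Continuity.} This is the main obstacle, because of the pole. Perturbing $R_0(B),R_1(B)$ by $\lambda,\mu$ changes the numerator by $\abs{\lambda}$ and the denominator by at most $O(\abs{\lambda}+\abs{\mu})$. The clipped value only matters where $\Phi\le1$, i.e.\ numerator $\le$ denominator. Since $\Opt-R_0(B)\ge\Opt-R_0(V_{1/3}^1)$, which is not obviously bounded below, I instead argue directly. The map $(N,M)\mapsto\min\{N/M,1\}$ has partial derivatives bounded by $1/M$ and $N/M^2\le 1/M$ on the region $N\le M$.

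I would therefore split into two cases. If the unperturbed denominator $M\ge\tfrac14\sigmageneric(V_\eta^1)$ (or a similar constant fraction), a Lipschitz estimate gives the $O((\abs\lambda+\abs\mu)/\sigmageneric(V_\eta^1))$ bound. Otherwise $M$ is small, and the numerator relation is needed: $\Phi\ge\tfrac12$ at the floor means $N\ge M/2$. With $N\ge$ a constant times $\Opt\ge\sigmageneric(V_\eta^1)/2$, a small $M$ forces $\Phi$ to be already clipped to $1$, so the inequality is trivial since the right side is $\ge1$.

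Concretely, I need a lower bound $N=\Opt-\rho R_0(V_\single)\ge(1-\rho)\Opt$ along $\bar x$, using $R_0(V_\single)\le\Opt$ from \eqref{eq:single-double-radial}. I also need $M\le(1-\rho)\sigmageneric(V_\eta^1)+$ (nonpositive terms), so $N/M\ge\Opt/\sigmageneric(V_\eta^1)\ge\tfrac12$. Hence $\Phi\ge\tfrac12$, and when $M<\sigmageneric(V_\eta^1)\cdot c$ this ratio is large. More precisely, $\Phi(\bar x(s))<1$ gives $M>N\ge (M/\sigmageneric(V_\eta^1))\Opt$. Combined with the perturbation $\abs{\lambda}+\abs{\mu}\le\epsilon'\sigmageneric(V_\eta^1)$ (otherwise the bound is trivial because $\Phi\le1$), this should give the required control. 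This step is where I expect the bookkeeping to be delicate.
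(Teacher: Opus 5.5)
Your proposal has two genuine gaps, one in each half. The reduction to the clipped pieces $\min\{\max\{c,\Phi^{(j)}\},1\}$ is fine, and so is the observation that along $\bar x$ the derivative has the constant sign of $N_0b-M_0a$.

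\textbf{Monotonicity.} The inequality you use to get $a/b\le$ floor is false. For $\delta<d(v)\le\tfrac12$ one has $\tfrac{2d(v)-\delta}{1-\delta}\ge 2d(v)$ if and only if $d(v)\ge\tfrac12$. So in fact $\sigmageneric(v)\le 2R_1(v)$ on $V_\delta^1$, and this reverse direction is what gives $\sigmageneric(V_\eta^1)\le 2\Opt$. Your step $2R_1+\tfrac32(R_0-R_1)\ge\tfrac74R_0$ is reversed as well, since $R_1\le\tfrac12R_0$.

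The conclusion itself also fails. Take $\delta=\tfrac13$ and let $V_\eta^1$ consist only of single clients of demand $0.4$. Then $\sigma_{1/3}(V_\eta^1)=0.7\,R_0(V_\single)$, so $b=1.9a$ for $\Phi^{(1)}_{1/3}$ and $b=1.6a$ for $\Phi^{(2)}_{1/3}$. That gives $a/b>\tfrac12$ and $a/b=\tfrac58>\tfrac47$, so the criterion ``$a/b\le$ floor'' is not available.

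What works, and is what the paper does, is to compare with the starting value instead of the floor. We have $N_0=\Opt$, $M_0\le\sigmageneric(V_\eta^1)\le b$, and $a=R_0(V_\single)\le\Opt$ by \eqref{eq:single-double-radial}. Hence $N_0b\ge\Opt\cdot\sigmageneric(V_\eta^1)\ge M_0a$. Equivalently, the fraction is always at least $\Opt/\sigmageneric(V_\eta^1)\ge a/b$, so your sign criterion gives monotonicity directly.

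\textbf{Continuity.} The case of a small denominator is not handled. The two facts you use, $N\ge(1-t)\Opt$ and $M\le(1-t)\sigmageneric(V_\eta^1)$, are compatible with $N\le M$ and with both tending to $0$ as $t\to1$. Your chain $M>N\ge(M/\sigmageneric(V_\eta^1))\Opt$ only yields $\sigmageneric(V_\eta^1)>\Opt$, not a lower bound on $M$.

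The missing idea is to keep the ``nonpositive terms'' you discarded, because they cancel the $-tR_0(V_\single)$ in the numerator. Since $R_0-R_1\ge\tfrac12R_0$, both denominators are at most $(1-t)\sigmageneric(V_\eta^1)-\tfrac34tR_0(V_\single)$. So an unclipped value $\le1$ at $\bar x(s)$ forces $\Opt-\tfrac14tR_0(V_\single)\le(1-t)\sigmageneric(V_\eta^1)$. With $R_0(V_\single)\le\Opt$ and $\sigmageneric(V_\eta^1)\le2\Opt$, this gives $t\le\tfrac47$.

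Hence $M\ge N\ge\tfrac37\Opt\ge\tfrac3{14}\sigmageneric(V_\eta^1)$ whenever the value is below $1$; otherwise the condition is trivial. So your ``small $M$'' case never occurs. Restrict to $\abs{\lambda}+\abs{\mu}\le\tfrac1{24}\sigmageneric(V_\eta^1)$; larger perturbations are trivial since $\Phi\le1$. Then the perturbed denominator stays above $M/2$, and your Lipschitz estimate gives the required $O\bigl((\abs{\lambda}+\abs{\mu})/\sigmageneric(V_\eta^1)\bigr)$ bound.
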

This statement follows from elementary computations and is proven in \cref{sec:proof_continuity_monotonicity_check}.

\begin{corollary}\label{cor:eta-greedy-final-cost-bound}
    The cost of the solution obtained by Algorithm \ref{alg:average_relative_greedy} with $\delta=\eta$ is at most $C_\eta$ with
    \[C_\eta \coloneq \alpha \cdot \Opt + \sigmaeta(V_0^\eta) + \int_0^{\sigmaeta(V_\eta^1)} \min\{\Phi_\eta^{(1)}(\bar x(s)), 1\}ds\; + O(\epsilon^{1/3})\Opt\;.\]
\end{corollary}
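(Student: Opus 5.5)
We split the cost of the returned solution into two parts. The first is the selected tours $S_i$ with $c(S_i)\le\sigmaeta(S_i)$. The second is the $\eta$-ITP call on the dissolved clients together with $V_0^\eta$.

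For the ITP part, \cref{lem:delta_tank} with $\delta=\eta$ bounds the cost by
\[\alpha\cdot\Opt+\sigmaeta(V_0^\eta)+\sum_{i:\,c(S_i)>\sigmaeta(S_i)}\sigmaeta(S_i).\]
For each kept tour, $c(S_i)=c_\eta(S_i)$. For each dissolved tour, $c_\eta(S_i)=\sigmaeta(S_i)$ equals exactly its contribution to the ITP bound. Hence the total cost is at most
\[\alpha\cdot\Opt+\sigmaeta(V_0^\eta)+\sum_{t=1}^k c_\eta(S_t).\]
It therefore remains to bound $\sum_t c_\eta(S_t)$ by the integral, which we do with \cref{lemma:cost_bound_generic_cost_density}.

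We take as cost density function
\[\Phi\coloneq\min\left\{\max\left\{\tfrac12,\Phi^{(1)}_\eta\right\},1\right\}\]
and check the conditions of \cref{def:cost-density-function}. Continuity and monotonicity hold by \cref{lemma:continuity_monotonicity_check}. The cost bound $\phi_{t+1}\le\Phi(R_0(B_t),R_1(B_t),\sigmaeta(\lS_t))$ comes from two estimates:
\begin{itemize}
\item \cref{lemma:cost_density_bound_simple} with $\delta=\eta$ gives $\phi_{t+1}\le\max\{\tfrac12,\Phi^{(1)}_\eta\}$ for every $B_t\subseteq\lS_t\cap V_{1/3}^1$.
\item $\phi_{t+1}\le 1$ follows from \cref{lemma:greedy-rule} with $\remainingtours=\lT^*$, since $c_\eta\le\sigmaeta$ tour-wise. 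The lemma needs $\sigmaeta(\lT^*[A_t])>0$, which holds because $A_t\neq\emptyset$ and $R_0>0$.
\end{itemize}
Negative denominators in $\Phi^{(1)}_\eta$ are read as $+\infty$ by \cref{remark:negative-denom-infinity}, so the minimum with $1$ is harmless. Thus $\Phi$ is a valid cost density function, and \cref{lemma:cost_bound_generic_cost_density} gives
\[\sum_t c_\eta(S_t)\le\int_0^{\sigmaeta(V_\eta^1)}\Phi(\bar x(s))\,ds+O(\epsilon^{1/3})\Opt.\]

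The main obstacle is the $\max\{\tfrac12,\cdot\}$ inside $\Phi$: the corollary's integrand is $\min\{\Phi^{(1)}_\eta(\bar x(s)),1\}$, without this max. So we must show $\Phi^{(1)}_\eta(\bar x(s))\ge\tfrac12$ whenever its denominator is positive.

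Along $\bar x(s)$, write $p=s/\sigmaeta(V_\eta^1)$. The numerator is $\Opt-pR_0(V_\single)$ and the denominator is $\sigmaeta(V_\eta^1)(1-p)-2p\,(R_0(V_\single)-R_1(V_\single))$. The claim reduces to
\[2\Opt\ \ge\ \sigmaeta(V_\eta^1)(1-p)+2p\,R_1(V_\single)-2p\,R_0(V_\single)+2p\,R_0(V_\single)=\sigmaeta(V_\eta^1)(1-p)+2pR_1(V_\single).\]
This follows from $\sigmaeta(V_\eta^1)\le 2R_1(V_\eta^1)\le 2\Opt$ and $R_1(V_\single)\le\Opt$, since the right-hand side is a convex combination of two quantities each at most $2\Opt$.

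Combining the two parts yields $C_\eta$.
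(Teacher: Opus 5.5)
Your proof is correct and follows the paper's argument. You use the same decomposition via \cref{lem:delta_tank} into $\alpha\cdot\Opt+\sigmaeta(V_0^\eta)+\sum_t c_\eta(S_t)$, the same cost density function $\min\{\max\{\tfrac12,\Phi^{(1)}_\eta\},1\}$ validated by \cref{lemma:cost_density_bound_simple,lemma:continuity_monotonicity_check}, and then apply \cref{lemma:cost_bound_generic_cost_density}. The only small difference is how you drop the $\max\{\tfrac12,\cdot\}$: you use the direct inequality $2\Opt\ge(1-p)\sigmaeta(V_\eta^1)+2pR_1(V_\single)$, whereas the paper combines monotonicity with $\Phi^{(1)}_\eta(\bar x(0))=\Opt/\sigmaeta(V_\eta^1)\ge\tfrac12$; both are valid.
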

\begin{proof}
    By \cref{lem:delta_tank}, the cost of the solution produced by the algorithm is at most
    \[\alpha \cdot \Opt + \sigmaeta(V_0^\eta) + \sum_{T \in \lS_k}c_\eta(T)\;.\]
    By \cref{lemma:cost_density_bound_simple} and the definition of $c_\eta$, the function $\min\{\max\{1/2, \Phi_\eta^{(1)}\}, 1\}$ satisfies the cost bound \eqref{eq:cost_bound_condition}. The other two requirements for a valid cost density function are checked in \cref{lemma:continuity_monotonicity_check}. By monotonicity and $\sigmaeta(V_\eta^1) \leq 2\Opt$, $\Phi^{(1)}_\eta(\bar x(s)) \geq \tfrac12$, so the maximum with $\tfrac12$ can be omitted. The result therefore follows from \cref{lemma:cost_bound_generic_cost_density}. \qed
\end{proof}

\begin{corollary}\label{cor:third-greedy-final-cost-bound}
    The cost of the solution obtained by Algorithm \ref{alg:average_relative_greedy} with $\delta=\frac13$ is at most $C_{1/3}$ with
    \begin{align*}
    C_{1/3} \coloneq \alpha \cdot \Opt + \sigmathird(V_0^\eta) + \int_0^{\sigmathird(V_\eta^1)}\min\{\Phi_{1/3}^{(1)}(\bar x(s)), \Phi_{1/3}^{(2)}(\bar x(s)), 1\}ds\\ + O(\epsilon^{1/3})\Opt\;.
    \end{align*}
\end{corollary}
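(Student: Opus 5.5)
The argument follows the proof of \cref{cor:eta-greedy-final-cost-bound} step by step, with $\delta=\tfrac13$ in place of $\delta=\eta$.

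First, \cref{lem:delta_tank} with $\delta=\tfrac13$ bounds the cost of the returned solution. The greedy phase of Algorithm~\ref{alg:average_relative_greedy} serves the clients of $V_\eta^1$ through the tours $S_1,\ldots,S_k$. Every tour $S_i$ with $c(S_i)>\sigmathird(S_i)$ is dissolved, and its clients are passed to the $\tfrac13$-ITP together with $V_0^\eta$. Such a tour costs us $\sigmathird(S_i)=c_{1/3}(S_i)$ in the ITP instead of $c(S_i)$. Hence the total cost is at most
\[\alpha\cdot\Opt+\sigmathird(V_0^\eta)+\sum_{T\in\lS_k}c_{1/3}(T)\;.\]

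Next, $\Phi\coloneq\min\{\max\{\tfrac12,\Phi^{(1)}_{1/3}\},\max\{\tfrac47,\Phi^{(2)}_{1/3}\},1\}$ is a valid cost density function in the sense of \cref{def:cost-density-function}.
\begin{itemize}
\item \emph{Cost bound.} \cref{lemma:cost_density_bound_simple} with $\delta=\tfrac13$ gives $\phi_{t+1}\le\max\{\tfrac12,\Phi^{(1)}_{1/3}\}$ for every $B_t\subseteq\lS_t\cap V_{1/3}^1$. \cref{lemma:cost_density_bound_complex} gives $\phi_{t+1}\le\max\{\tfrac47,\Phi^{(2)}_{1/3}\}$. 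Finally, $\phi_{t+1}\le 1$ holds because \cref{lemma:greedy-rule} applies with truncated costs $c_{1/3}\le\sigmathird$. Taking the minimum of the three bounds gives \eqref{eq:cost_bound_condition}.
\item \emph{Continuity and monotonicity.} These are exactly \cref{lemma:continuity_monotonicity_check}.
\end{itemize}
\cref{lemma:cost_bound_generic_cost_density} then bounds $\sum_t c_{1/3}(S_t)$ by $\int_0^{\sigmathird(V_\eta^1)}\Phi(\bar x(s))\,ds+O(\epsilon^{1/3})\Opt$.

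The only remaining step, and the only point of slight subtlety, is removing the floors $\tfrac12$ and $\tfrac47$ from $\Phi$ along the curve $\bar x(s)$. For $\Phi^{(1)}_{1/3}$ this works as in \cref{cor:eta-greedy-final-cost-bound}. At $s=0$ we have $\Phi^{(1)}_{1/3}(\bar x(0))=\Opt/\sigmathird(V_\eta^1)\ge\tfrac12$, since $\sigmathird(V_\eta^1)\le 2R_1(V)\le 2\Opt$. By monotonicity along $\bar x$ (from \cref{lemma:continuity_monotonicity_check}), the value stays at least $\tfrac12$ for all $s$, with the infinity convention of \cref{remark:negative-denom-infinity} covering non-positive denominators. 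So the floor $\tfrac12$ can be dropped.

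For the $\tfrac47$ floor, I would not try to show $\Phi^{(2)}_{1/3}(\bar x(s))\ge\tfrac47$. Instead I would argue pointwise:
\[\min\{\max\{\tfrac12,\Phi^{(1)}_{1/3}\},\max\{\tfrac47,\Phi^{(2)}_{1/3}\},1\}\le\min\{\Phi^{(1)}_{1/3},\Phi^{(2)}_{1/3},1\}\;.\]
The $\tfrac12$ floor is already gone, since $\Phi^{(1)}_{1/3}\ge\tfrac12$ along the curve. If $\Phi^{(2)}_{1/3}\ge\tfrac47$, the floor $\tfrac47$ is inactive. Otherwise $\Phi^{(2)}_{1/3}<\tfrac47$, and the left side is at most $\min\{\Phi^{(1)}_{1/3},\tfrac47,1\}$. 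This is where the obstacle lies: we need $\Phi^{(1)}_{1/3}\le\Phi^{(2)}_{1/3}$ or a comparable relation whenever $\Phi^{(2)}_{1/3}<\tfrac47$. The two functions share the numerator $\Opt-R_0(B)$, so the question reduces to comparing denominators. The denominator of $\Phi^{(2)}_{1/3}$ is larger exactly when $\tfrac12(R_0(B)-R_1(B))\ge\tfrac18R_0(V_\double)$.

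If that comparison fails, I would try an alternative: show that $\Phi^{(2)}_{1/3}(\bar x(s))\ge\tfrac47$ can be assumed by monotonicity once its value at $s=0$ is checked. At $s=0$ its denominator is $\sigmathird(V_\eta^1)-\tfrac18R_0(V_\double)$. Using $\sigmathird(V_\eta^1)\le 2R_1(V_\eta^1)$ and $\tfrac12R_0(V_\double)\le\Opt$, this is at most roughly $\tfrac74\Opt$, which gives $\Phi^{(2)}_{1/3}(\bar x(0))\ge\tfrac47$. I expect this alternative to be the intended argument.

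Once the floors are removed, the integral bound and the displayed cost bound combine to give $C_{1/3}$.
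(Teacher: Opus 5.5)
Your overall structure matches the paper's proof. You bound the cost by $\alpha\cdot\Opt+\sigmathird(V_0^\eta)+\sum_t c_{1/3}(S_t)$ via \cref{lem:delta_tank}, certify $\min\{\max\{\tfrac12,\Phi^{(1)}_{1/3}\},\max\{\tfrac47,\Phi^{(2)}_{1/3}\},1\}$ as a valid cost density, apply \cref{lemma:cost_bound_generic_cost_density}, and drop the floors using monotonicity in $s$ plus the value at $s=0$. Your second route for the $\tfrac47$ floor is exactly the paper's. The pointwise route cannot be closed with the available lemmas anyway: if $\Phi^{(2)}_{1/3}<\tfrac47$ and $\Phi^{(2)}_{1/3}<\Phi^{(1)}_{1/3}$ could occur on the curve, the integrand in the statement would not be justified. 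So the lower bound $\Phi^{(2)}_{1/3}(\bar x(s))\ge\tfrac47$ is genuinely needed.

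\textbf{The gap} is in the one step specific to this corollary, the check at $s=0$. The inequalities you cite do not give $\tfrac74\Opt$. From $\sigmathird(V_\eta^1)\le 2R_1(V_\eta^1)\le 2\Opt$ you only get $\sigmathird(V_\eta^1)-\tfrac18R_0(V_\double)\le 2\Opt$. The bound $\tfrac12R_0(V_\double)\le\Opt$ is an upper bound on the subtracted term, so it points the wrong way. Your two inequalities do not exclude $\sigmathird(V_\eta^1)=2\Opt$ with $R_0(V_\double)=0$, which would give only $\Phi^{(2)}_{1/3}(\bar x(0))\ge\tfrac12$. Since $\Phi^{(2)}_{1/3}(\bar x(0))\ge\tfrac47$ is equivalent to the denominator being at most $\tfrac74\Opt$, there is no slack to lose.

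\textbf{The missing ingredient} is the explicit form of the $\tfrac13$-potential:
\begin{itemize}
\item $\sigmathird(v)=\tfrac32R_1(v)$ for $d(v)\le\tfrac13$;
\item $\sigmathird(v)=3R_1(v)-\tfrac12R_0(v)\le\tfrac32R_1(v)+\tfrac14R_0(v)$ for $d(v)>\tfrac13$, using $R_1(v)\le\tfrac12R_0(v)$.
\end{itemize}
Hence
\[\sigmathird(V_\eta^1)-\tfrac18R_0(V_\double)\le\tfrac32R_1(V)+\tfrac14R_0(V_\single)+\tfrac18R_0(V_\double)\le\tfrac32\Opt+\tfrac14\Opt=\tfrac74\Opt.\]
This uses $R_1(V)\le\Opt$ and \eqref{eq:single-double-radial} applied to the combination $R_0(V_\single)+\tfrac12R_0(V_\double)$. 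With this computation in place of yours, the argument goes through.
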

\begin{proof}
    We first note that $\Phi_{1/3}^{(2)}(\bar x(s)) \geq \tfrac47$: By \cref{lemma:continuity_monotonicity_check} it is monotonously increasing in $s$. At $s = 0$, we use that
    \begin{align*}
        \sigmathird(V_\eta^1) - \tfrac18 R_0(V_\double) \leq \tfrac{3}{2}\Opt + \tfrac14 R_0(V_\single) + \tfrac18 R_0(V_\double) \overset{\eqref{eq:single-double-radial}}{\leq} \tfrac 74 \Opt.
    \end{align*}
    We now proceed analogously to \cref{cor:eta-greedy-final-cost-bound}, noting also that $\Phi_{1/3}^{(1)}(\bar x(s)) \geq \tfrac12$, so both maxima are redundant. The necessary cost bounds are shown in \cref{lemma:cost_density_bound_simple} and \cref{lemma:cost_density_bound_complex}, and continuity and monotonicity are checked in \cref{lemma:continuity_monotonicity_check}. The result then follows from \cref{lemma:cost_bound_generic_cost_density}. \qed
\end{proof}

\section{Deriving the Approximation Ratio}\label{sec:compute-apx-ratio}

The objective of this section is to prove \cref{thm:main_theorem}. To this end, we analyze the algorithm that runs the Matching Savings Algorithm and the Average Greedy Algorithm with each of the potentials $\sigmaeta$ and $\sigmathird$, then returns the cheapest of the three solutions. We show that, for every CVRP instance, at least one of the three cost bounds is at most $(\alpha + 1.659)\Opt$, where $\alpha$ is the approximation ratio for TSP.

After simplifying the cost bounds, we will see that they depend on the same quantities
\begin{equation}\label{eq:variables}
    R_1(V_0^\eta), R_1(V_\eta^{1/3}), R_1(V_\single), R_1(V_\double), R_0(V_\single), R_0(V_\double)\;.
\end{equation}
We show that, for every feasible choice of these quantities, at least one of the three bounds is sufficiently small. Throughout this section, we assume that the instance is scaled such that $\Opt=1$.

\begin{lemma}\label{thm:final_approx_ratio}
    Let $C_M$ be the cost bound for the matching algorithm (Algorithm \ref{alg:matching}) from \cref{lem:MatchingSavingsCostBound}, and let $C_\eta,\ C_{1/3}$ denote the cost bounds for the Average Greedy (Algorithm \ref{alg:average_relative_greedy}) from \cref{cor:eta-greedy-final-cost-bound} and \cref{cor:third-greedy-final-cost-bound}. For sufficiently small constants $\epsilon, \eta > 0$, every CVRP instance $I$ satisfies 
    \[\min \{C_M(I), C_\eta(I), C_{1/3}(I)\} \leq \bigl(\alpha + 1.659 \bigr) \Opt.\]
\end{lemma}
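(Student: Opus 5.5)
Normalize $\Opt=1$ and express all three bounds as explicit functions of the six quantities in \eqref{eq:variables}. Write $x_0=R_1(V_0^\eta)$, $x_1=R_1(V_\eta^{1/3})$, $p_s=R_1(V_\single)$, $p_d=R_1(V_\double)$, $q_s=R_0(V_\single)$, $q_d=R_0(V_\double)$. Together with the linear constraints below, these variables form a polytope, and the claim becomes a three-way min--max statement over it.

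\textbf{Step 1: express each bound.} For $C_M$ we use \cref{lem:MatchingSavingsCostBound} directly; it is affine in $p_s,p_d,q_s,q_d$. For $C_\eta$ and $C_{1/3}$ we first bound the potentials. We have $\sigma_\eta(V_0^\eta)\le x_0/(1-\eta)$, and since $\sigma_\eta(v)\le 2R_1(v)$ on $V_\eta^1$, also $\sigma_\eta(V_\eta^1)\le 2(x_1+p_s+p_d)$. For $\delta=1/3$ we get $\sigma_{1/3}(V_0^\eta)\le \tfrac32 x_0$ and $\sigma_{1/3}(V_\eta^{1/3})\le \tfrac32 x_1$. On $V_{1/3}^1$ we use $\sigma_{1/3}(v)=(3\hat d(v)-\tfrac12)R_0(v)$, which gives $\sigma_{1/3}(V_{1/3}^1)=3(p_s+p_d)-\tfrac12(q_s+q_d)$. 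The integrands $\Phi^{(1)}$ and $\Phi^{(2)}$ along $\bar x(s)$ are ratios of affine functions in $s$, so the integrals have closed forms. Write $\Sigma$ for the total potential, $a=q_s/\Sigma$ and $b=(q_s-p_s)/\Sigma$. Then
\[
\Phi^{(1)}(\bar x(s))=\frac{1-as}{\Sigma-(1+2b)s},
\]
and its integral is a logarithm plus a linear term, truncated at the point $s^*$ where the integrand reaches $1$. We also need monotonicity of the bound in $\Sigma$: enlarging the upper limit of the integral only increases the bound, because the integrand is at most $1$ while the corresponding $\delta$-ITP cost saving has density $1$. This lets us replace $\Sigma$ by its upper estimate. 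Finally we drop the $O(\epsilon^{1/3})$ terms and let $\eta\to0$, so that $x_0/(1-\eta)\to x_0$ up to a negligible error.

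\textbf{Step 2: constraints.} The feasible region is cut out by the following inequalities:
\[
x_0+x_1+p_s+p_d\le 1 \quad (\text{Haimovich--Rinnooy Kan}),
\]
\[
q_s+\tfrac12 q_d\le 1 \quad \eqref{eq:single-double-radial},
\]
\[
\tfrac13 q_s< p_s\le \tfrac12 q_s, \qquad \tfrac13 q_d< p_d\le \tfrac12 q_d,
\]
together with nonnegativity. It then suffices to show
\[
\max_{\text{feasible}}\ \min\{C_M,C_\eta,C_{1/3}\}-\alpha\le 1.659 .
\]
Since a minimum of three functions is at most any convex combination of them, it suffices to find multipliers $\lambda_M,\lambda_\eta,\lambda_{1/3}\ge0$ summing to one, possibly depending on the region, such that $\lambda_MC_M+\lambda_\eta C_\eta+\lambda_{1/3}C_{1/3}\le \alpha+1.659$ holds on the polytope.

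\textbf{Step 3: reduction and verification.} First we reduce dimension. Each cost bound is monotone in $x_0$ and $x_1$, so the Haimovich--Rinnooy Kan constraint can be taken tight. Each bound is also increasing in $p_d$ while only $C_M$ benefits from $q_d$, which should let us fix $q_d$ at an extreme value. This leaves a two- or three-parameter family, essentially $(q_s,\,p_s/q_s,\,x_1)$. On it we verify the inequality by a piecewise analysis. We split according to whether the truncation points $s^*$ lie inside the integration range and according to which of $\Phi^{(1)}_{1/3}$ and $\Phi^{(2)}_{1/3}$ is smaller. On each piece we use convexity of the logarithmic terms to bound the maximum on a finite grid with Lipschitz error control, which gives a rigorous certificate.

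The main obstacle is this last verification. The bound $1.659$ is presumably close to tight, so the piecewise logarithmic expressions have to be handled with care. The plan is to first locate the worst case numerically, which should be a point where all three bounds coincide. Around that point we fix the multipliers. Away from it we use a coarse grid and prove that the convex combination stays below the target on each grid cell.
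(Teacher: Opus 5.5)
Your outline has the same shape as the paper's proof: normalize, replace $\sigmaeta(V_\eta^1)$ by $2R_1(V_\eta^1)$, raise $x_0$ until $R_1(V)=1$, cut down to few parameters, then certify by computer. The gap is the dimension reduction in Step~3, which is what makes the certificate feasible: it rests on monotonicity claims that are false.

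By \cref{lem:MatchingSavingsCostBound}, $C_M=\alpha+\tfrac32+\tfrac32(p_s-p_d)-\tfrac12(q_s-q_d)$ is \emph{decreasing} in $p_d$. Also, $C_{1/3}$ is not independent of $q_d$. It depends on $q_d$ in two ways with opposite effects:
\begin{itemize}
\item through $\sigmathird(V_\eta^1)=\tfrac32x_1+3(p_s+p_d)-\tfrac12(q_s+q_d)$, which changes both the integration range and the term $(1-t)\sigmathird(V_\eta^1)$ in every denominator;
\item through the $-\tfrac18q_d$ in the denominator of $\Phi^{(2)}_{1/3}$.
\end{itemize}
So you cannot push $p_d$ or $q_d$ alone to an extreme and make all three bounds worse at once. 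Even if you could fix $q_d$, with $x_0$ determined you still have four free parameters $x_1,q_s,p_s,p_d$, not $(q_s,p_s/q_s,x_1)$.

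Your argument for monotonicity in $\Sigma$ also only looks at the upper limit. For fixed $s$, the integrand itself decreases as $\Sigma$ grows. You need the rescaled derivative computation, or you can run the greedy with potential $2R_1$ on $V_\eta^1$, as the paper remarks.

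The missing idea is to use the matching bound as a \emph{constraint} that shrinks the region, and then to move along \emph{coupled} directions. Set $m=\tfrac12(3p_s-q_s+3p_d-q_d)-0.159$, $z=3p_d-q_d$, $\lambda=q_s-2p_s$ and $r=x_1+p_s+p_d$. Then $C_M=\rho+m-z$. Assuming all three bounds are at least $\rho$ gives:
\begin{itemize}
\item $0\le z\le m$;
\item from this, together with $z\le\tfrac12q_d$ and \eqref{eq:single-double-radial}, $m+\lambda\le\tfrac{91}{750}$;
\item $r\ge0.659$, from $\widehat C_\eta\le\alpha+1+r$.
\end{itemize}
Next, close the slack $\mu=2-2q_s-q_d$ by setting $q_d\gets q_d+\mu$, $p_d\gets p_d+\mu/3$, $x_1\gets x_1-\mu/3$. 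This keeps $C_M$, $r$, $\sigmathird(V_\eta^1)$, the common numerator $1-q_st$ and the denominators of both $\Phi^{(1)}$ integrands fixed. It only lowers the denominator of $\Phi^{(2)}_{1/3}$. Then raise $z$ to $m$, which makes $C_M$ tight; a derivative computation shows that all integrands are non-decreasing along this move. You land in the prism \eqref{eq:three-dimensional-prism}, where only $\min\{\widehat C_\eta,\widehat C_{1/3}\}$ has to be certified.

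Finally, to discard the $O(\epsilon^{1/3}+\eta)$ terms you need a \emph{strict} inequality $<\rho$ on a closed region, so that compactness gives a uniform slack. Proving ``$\le1.659$'' on a region with strict constraints such as $\tfrac13q_s<p_s$ does not suffice. Since the worst case exceeds $\alpha+1.6588$, that slack is below $2\cdot10^{-4}$. This makes a grid-plus-Lipschitz certificate in four or five dimensions very delicate, whereas interval arithmetic on the small prism is manageable.
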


We first use \cref{thm:final_approx_ratio} to finish the proof of \cref{thm:main_theorem}.
\begin{proofof}{\cref{thm:main_theorem}}
    Fix sufficiently small $\epsilon, \eta > 0$ as in \cref{thm:final_approx_ratio}. Applying \cref{thm:final_approx_ratio} proves that returning the cheapest of the three solutions gives the claimed approximation guarantee.

    It remains to prove a polynomial runtime. The Matching Savings Algorithm runs in polynomial time. 
    For the Average Greedy Algorithm, every feasible tour in $V_\eta^1$ contains at most $1/\eta$ clients. Thus, there are $n^{O(1/\eta)}$ candidate tours, where $n=|V|$, and at most $n$ iterations. Hence, examining these
    candidates in each iteration can be done in polynomial time. 
    For fixed $\epsilon$ and $\eta$, the reduction in \cref{app:proof-any-tour-small-r1} uses only a constant number
    of copies, and we consider only a constant number of guess vectors $(\tilde r_b)_{b=1}^{\lceil1/\epsbucket\rceil}$ (see \cref{sec:greedy}).
    Hence, the combined algorithm runs in polynomial time.
\end{proofof}
It remains to prove \cref{thm:final_approx_ratio}. The proof proceeds in three steps. 
First we simplify the cost bound for the Average Greedy with potential $\sigmaeta$ and omit the error terms in $C_\eta$ and $C_{1/3}$. Next, we use the Matching Savings bound to restrict to the configurations of $R_0$- and $R_1$-values that give bad cost bounds for that algorithm. Doing so we reduce the six-dimensional problem to a three-dimensional prism. Finally, we use interval arithmetic to certify an upper bound over this prism. AI tools were used to assist with the computations in \cref{lem:three-dimensional-reduction} and the implementation of the interval-arithmetic verification.

We first simplify the bound for the Average Greedy with potential $\sigmaeta$. For every $W \subseteq V$,
\[\sigma_\eta(W) \leq  (1 + O(\eta)) R_1(W_0^\eta) + 2 R_1(W_\eta^1). \]
Replacing $\sigmaeta(V_\eta^1)$ by its upper bound $2R_1(V_\eta^1)$ only increases the cost bound $C_\eta$ in \cref{cor:eta-greedy-final-cost-bound}:
Substituting $s=t\,\sigma_\eta(V_\eta^1)$ rescales the integration interval to $[0,1]$. A short differentiation argument then shows that the resulting integral, including its prefactor, is non-decreasing in $\sigmaeta(V_\eta^1)$. Define
\[
    \begin{aligned}
       \widehat C_\eta \coloneq &\alpha \cdot \Opt+R_1(V_0^\eta)\\
        + &\int_0^{2R_1(V_\eta^1)}\min\left\{1,\,
        \frac{ \Opt- \frac{R_0(V_{\mathrm{single}})}{2 R_1(V_\eta^1)}s}
        {2R_1(V_\eta^1) - \left( 1+ \dfrac{ 2R_0(V_{\mathrm{single}})-2R_1(V_{\mathrm{single}})}{2 R_1(V_\eta^1)}\right)s  }
        \right\}\,ds \;.     
    \end{aligned}
\]
\begin{remark}
The analysis in \cref{sec:greedy} and \cref{sec:cost-densities} does not require the explicit formula for $\sigmaeta$. One can check that the proof of the cost bound also works when using $2R_1(v)$ as the potential for $v \in V_\eta^1$. Consequently, running the Average Greedy  with this modified potential yields the same cost bound.
\end{remark}
Similarly, denote by $\widehat C_{1/3}$ the cost bound obtained by omitting the error term in \cref{cor:third-greedy-final-cost-bound}.
 The resulting term is
\[
    \hat C_{1/3} = \alpha\cdot \Opt + \frac32 R_1(V_0^\eta) + \int_0^{\sigmathird(V_\eta^1)}\min\{\Phi^{(1)}_{1/3}(\bar x(s)), \Phi^{(2)}_{1/3}(\bar x(s)), 1\}ds\;,
\]
where inserting $\bar x(s)$ from \cref{def:cost-density-function} into the cost density functions from \cref{lemma:cost_density_bound_simple,lemma:cost_density_bound_complex} and setting $t \coloneq s / \sigmathird(V_\eta^1)$ yields
\[
\begin{aligned}
    \Phi^{(1)}(\bar x(s)) &= \frac{\Opt - t \cdot R_0(V_\single)}{(1-t)\sigmathird(V_\eta^1) - 2t(R_0(V_\single) - R_1(V_\single))}, \\[0.2cm]
    \Phi^{(2)}(\bar x(s)) &= \frac{\Opt - t \cdot R_0(V_\single)}{(1-t)\sigmathird(V_\eta^1) - \frac32 t(R_0(V_\single) - R_1(V_\single)) - \frac18 R_0(V_\double)}\;.
\end{aligned}
\]

Here, we again use the convention from \cref{remark:negative-denom-infinity} to set the fractions to infinity whenever the denominator is non-positive.
The preceding analysis shows $ C_\delta \leq\widehat C_\delta+O(\epsilon^{1/3}+\eta)\Opt$, for $\delta\in\{\eta,1/3\}$.
Set $\rho \coloneq \alpha + 1.659$, normalize $\Opt =1$ and consider the compact region of the quantities in \eqref{eq:variables} defined by non-negativity and
\begin{equation}
\begin{aligned}\label{eq:cvrp-instance-constraints}
    R_1(V) \leq 1, \quad R_0(V_\single) + \frac12 R_0(V_\double) \leq 1,\\
    2R_1(V_\single) \leq R_0(V_\single) \leq 3R_1(V_\single),\\
    2R_1(V_\double) \leq R_0(V_\double) \leq 3R_1(V_\double)
\end{aligned}
\end{equation}
This region contains the quantities associated with every normalized instance by definitions of $R_0$ and $R_1$ and inequality \eqref{eq:single-double-radial}. It suffices to prove
\[\min\{C_M, \widehat C_\eta, \widehat C_{1/3}\} < \rho\]
throughout this region. Indeed, the cost bounds define continuous functions on a compact region and hence the minimum over the three cost bounds attains its maximum on the compact region. This gives a constant $\kappa > 0$ such that for every normalized CVRP instance $I$
\[\min\{C_M, \widehat C_\eta, \widehat C_{1/3}\} < \rho - \kappa\]
For small enough $\epsilon, \eta > 0$, the uniform error bound $O(\epsilon^{1/3} + \eta) \Opt$ can then be absorbed into that slack, proving \cref{thm:final_approx_ratio}.
\begin{lemma}[Reduction to a three-dimensional prism]
\label{lem:three-dimensional-reduction}
Suppose that 
\begin{equation}
\label{eq:three-dimensional-certificate}
    \min\bigl\{\widehat C_\eta(r, m, \lambda),
                    \widehat C_{1/3}(r, m, \lambda)\bigr\}
    < \rho
\end{equation}
on the triangular prism
\begin{equation}
\label{eq:three-dimensional-prism}
    0.659 \leq r\leq 1,
    \qquad
    m\geq0,
    \qquad
    \lambda \geq0,
    \qquad
    m + \lambda \leq\tfrac{91}{750},
\end{equation}
where the cost bounds are evaluated with
\begin{equation}
\label{eq:variable-transformation-cvrp-inst}
\begin{aligned}
 R_1(V_0^\eta)     &=1-r,
&R_1(V_\eta^{1/3}) &=r-\tfrac{841}{1500}+\lambda,\\
 R_1(V_\single)    &=\tfrac{159}{500}+m+\lambda,
&R_1(V_\double)    &=\tfrac{91}{375}-m-2\lambda,\\
 R_0(V_\single)    &=\tfrac{159}{250}+2m+3\lambda,
&R_0(V_\double)    &=\tfrac{91}{125}-4m-6\lambda.
\end{aligned}
\end{equation}
Then
\[\min\{C_M, \widehat C_\eta, \widehat C_{1/3}\} < \rho \]
throughout the compact region defined by \eqref{eq:cvrp-instance-constraints}.
\end{lemma}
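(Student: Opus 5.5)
We argue by contraposition: take a point $x$ of the region \eqref{eq:cvrp-instance-constraints} with $C_M(x)\ge\rho$, and show that $\min\{\widehat C_\eta,\widehat C_{1/3}\}$ at $x$ is dominated by the same quantity at some point of the prism \eqref{eq:three-dimensional-prism} under the parametrization \eqref{eq:variable-transformation-cvrp-inst}. The hypothesis \eqref{eq:three-dimensional-certificate} then finishes the proof.

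\textbf{Step 1: the matching bound forces many single clients.} With $\Opt=1$, \cref{lem:MatchingSavingsCostBound} gives
\[
C_M=\alpha+\tfrac32+\tfrac32\bigl(R_1(V_\single)-R_1(V_\double)\bigr)-\tfrac12\bigl(R_0(V_\single)-R_0(V_\double)\bigr).
\]
So $C_M\ge\rho$ is the linear inequality
\[
\tfrac32\bigl(R_1(V_\single)-R_1(V_\double)\bigr)-\tfrac12\bigl(R_0(V_\single)-R_0(V_\double)\bigr)\ge 0.159 .
\]
Since $R_0\le 3R_1$, the single contribution satisfies $\tfrac32R_1(V_\single)-\tfrac12R_0(V_\single)\ge 0$. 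The double contribution $-\tfrac32R_1(V_\double)+\tfrac12R_0(V_\double)$ is at most $\tfrac12R_1(V_\double)$.

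\textbf{Step 2: monotonicity of $\widehat C_\eta$ and $\widehat C_{1/3}$.} We check, by differentiating the integrands, in which direction each cost bound moves under the relevant perturbations.

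\begin{itemize}
\item Decreasing $R_1(V_0^\eta)$ while increasing $R_1(V_\eta^{1/3})$ by the same amount, with $R_1(V)=1$ fixed, should make both greedy bounds worse. This is because $\sigma$ per unit of $R_1$ is larger on $V_\eta^1$ and the integrand is capped by $1$. It justifies putting $R_1(V)=1$, which yields the sum in \eqref{eq:variable-transformation-cvrp-inst}.
\item $R_0(V_\single)$ enters the numerators $\Opt-tR_0(V_\single)$ with a favourable sign. It enters the denominators through $R_0-R_1$. Once $R_1(V_\single)$ is fixed, this suggests that the worst case lies on an extreme of $2R_1\le R_0\le 3R_1$. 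The same holds for the doubles, where $R_0(V_\double)$ only appears through $-\tfrac18R_0(V_\double)$ in $\Phi^{(2)}$ and through the constraint \eqref{eq:single-double-radial}.
\item The worst configurations should make \eqref{eq:single-double-radial} tight and the matching inequality from Step 1 tight. Otherwise we can shift mass in a direction that weakly increases both greedy bounds without leaving the region $C_M\ge\rho$.
\end{itemize}

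\textbf{Step 3: reduce to three parameters.} The six-dimensional region, intersected with these tight constraints, is then parametrized by three coordinates:
\begin{itemize}
\item $r=R_1(V_\eta^1)$;
\item $m$, the amount by which singles move off the extreme $R_0(V_\single)=2R_1(V_\single)$;
\item $\lambda$, a coupled shift that moves double mass to single mass with $R_0=3R_1$.
\end{itemize}
This produces exactly the affine formulas \eqref{eq:variable-transformation-cvrp-inst}. One checks directly that the constraints \eqref{eq:single-double-radial} and $C_M=\rho$ hold identically there: the constants $159/250$, $91/125$, etc.\ are chosen so that both are tight. The bound $r\ge 0.659$ comes from requiring $R_1(V_\eta^{1/3})\ge0$ together with the greedy bound being trivially below $\rho$ for small $r$. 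This last claim should be verified by bounding the integrand by $1$ and $R_1(V_0^\eta)$ by $1-r$. The bound $m+\lambda\le 91/750$ is nonnegativity of $R_1(V_\double)$ and $R_0(V_\double)$.

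\textbf{Main obstacle.} The hardest part is Step 2: showing rigorously that moving to the tight faces never decreases $\min\{\widehat C_\eta,\widehat C_{1/3}\}$. Both integrands are minima of rational functions with a truncation at $1$ and the infinite-denominator convention, so the monotonicity must be checked piecewise. My first attempt is to verify sign conditions on the partial derivatives of each $\Phi$ along each of the chosen directions, pointwise in $t$. Integrating these then gives monotonicity, since a pointwise minimum of monotone functions is monotone in the same direction.
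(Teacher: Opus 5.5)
You have the right target. On the face cut out by $R_1(V)=1$, equality in \eqref{eq:single-double-radial} and $C_M=\rho$, the region is parametrized exactly by \eqref{eq:variable-transformation-cvrp-inst}, and $r\ge 0.659$ does follow from $\widehat C_\eta\le\alpha+1+r$. But the step that \emph{is} the lemma is only asserted (``should'', ``otherwise we can shift mass in a direction\dots''): moving a point where all three bounds are at least $\rho$ onto this face without decreasing either greedy bound. The one concrete monotonicity you state is also false. Shifting $\Delta$ of $R_1$ from $V_0^\eta$ to $V_\eta^{1/3}$ never increases $\widehat C_{1/3}$. Since $\sigmathird=\tfrac32R_1$ on all of $V_0^{1/3}$, the term $\tfrac32R_1(V_0^\eta)$ drops by $\tfrac32\Delta$ while $\Sigma=\sigmathird(V_\eta^1)$ grows by $\tfrac32\Delta$. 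Writing the integral as $\Sigma\int_0^1 h(t,\Sigma)\,dt$, with $h\le 1$ non-increasing in $\Sigma$, shows that the integral grows by at most $\tfrac32\Delta$. This is consistent with $r$ remaining a free coordinate of the prism. That move also keeps $R_1(V)$ fixed, so it cannot justify $R_1(V)=1$. The correct reason is simply that increasing $R_1(V_0^\eta)$ alone raises both greedy bounds and leaves $C_M$ unchanged.

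What is missing is a choice of directions that keeps the integration ranges $2r$ and $\sigmathird(V_\eta^1)$ fixed. Then only the rescaled integrands need to be compared pointwise in $t$. These have common numerator $N=1-tR_0(V_\single)$ and denominators $D_\eta,D_1,D_2$. A generic mass shift changes the range and the integrands at once, and the example above shows the resulting sign is not automatic. The paper uses affine coordinates $r$, $m$, $\lambda$, $\mu=2-2R_0(V_\single)-R_0(V_\double)$ (twice the slack of \eqref{eq:single-double-radial}) and $z=3R_1(V_\double)-R_0(V_\double)$. In these coordinates $C_M=\rho+m-z$ and $\sigmathird(V_\eta^1)=\tfrac32 r+m+0.159$, and only $\mu$ and $z$ are moved.
\begin{itemize}
\item Setting $\mu=0$ shifts $\mu/3$ of $R_1$ from $V_\eta^{1/3}$ to $V_\double$ and raises $R_0(V_\double)$ by $\mu$. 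This leaves $N$, $D_\eta$, $D_1$ unchanged and lowers $D_2$ by $\mu/8$.
\item Raising $z$ to $m$ with $r,m,\lambda$ fixed gives $\partial_z N=\partial_z D_\eta=\partial_z D_1=2t$ and $\partial_z D_2=\tfrac32 t-\tfrac12$. Hence $\partial_z(N/D_i)\ge 0$ whenever $0<N\le D_i$.
\end{itemize}
Because $(r,m,\lambda)$ is invariant under both moves, the prism inequalities can be proved once, at the original point:
\begin{itemize}
\item $0\le z\le m$, from $R_0(V_\double)\le 3R_1(V_\double)$ and $C_M\ge\rho$;
\item $\lambda\ge 0$ and $0.659\le r\le 1$;
\item $m+\lambda\le\tfrac{91}{750}$, from $C_M\ge\rho$ together with $z\le\tfrac12 R_0(V_\double)$ and \eqref{eq:single-double-radial}.
\end{itemize}
Your account of the last bound is incorrect. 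Nonnegativity of $R_1(V_\double)$ and $R_0(V_\double)$ on the face gives only $m+2\lambda\le\tfrac{91}{375}$ and $2m+3\lambda\le\tfrac{91}{250}$; there, $m+\lambda\le\tfrac{91}{750}$ is the constraint $R_0(V_\double)\ge 2R_1(V_\double)$. Similarly, the distance of the singles from $R_0=2R_1$ on the face is $\lambda$, not $m$; $m$ equals $3R_1(V_\double)-R_0(V_\double)$ there.
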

\begin{proof}
Suppose for contradiction that all three bounds are at least $\rho$. We regard the cost bounds as functions in the quantities \eqref{eq:variables} under the constraints \eqref{eq:cvrp-instance-constraints}. Intermediate tuples need not correspond to actual CVRP instances. Increasing $R_1(V_0^\eta)$ until $R_1(V) = \Opt = 1$ increases both greedy bounds and leaves $C_M$ fixed. Thus, we may assume $R_1(V) = 1$. Define
\begin{align*}
    m &\coloneq \tfrac12\Bigl(3R_1(V_\single)-R_0(V_\single)+3R_1(V_\double)-R_0(V_\double)\Bigr)-0.159\;, \\
    \lambda &\coloneq R_0(V_\single) - 2 R_1(V_\single)\;, \\
    \mu &\coloneq 2 - 2R_0(V_\single) - R_0(V_\double)\;, \\
    r &\coloneq R_1(V_\eta^{1/3}) + R_1(V_\single) + R_1(V_\double)\;, \\
    z &\coloneq 3R_1(V_\double) - R_0(V_\double)
\end{align*}
Using $R_1(V) = 1$, one can verify that this defines an invertible affine transformation from the original variables.
Since $\rho \leq C_M = \rho + m - z$, the region bounds imply
\begin{equation}\label{eq:z-leq-m}
    0 \leq z \leq m \qquad \text{and} \qquad \lambda, \mu \geq 0\;.
\end{equation}
Moreover, since $C_M \geq \rho$
\[3R_1(V_\single) - R_0(V_\single) -z \geq \tfrac{159}{500}.\]
Rearranging gives $R_1(V_\single) \geq 1/3(R_0(V_\single)+z) + \frac{53}{500}$. Using this bound in the first inequality and using $z\leq\frac12R_0(V_\double)$ and \eqref{eq:single-double-radial} in the second inequality, we obtain
\begin{align*}
    \tfrac{159}{500}+2m+2\lambda
    &=R_0(V_\single)-R_1(V_\single)+z\\
    &\leq\tfrac23\bigl(R_0(V_\single)+z\bigr)-\tfrac{53}{500}
    \leq\tfrac{841}{1500}.
\end{align*}
Furthermore, $\rho \leq \widehat C_\eta \leq \alpha + (1-r) + 2r$ since the integrand in $\widehat C_\eta$ is at most one. Hence,
\begin{equation}\label{eq:m-lambda-r-bounds}
m+\lambda \leq \tfrac{91}{750}, \qquad 0.659 \leq r \leq 1\;.
\end{equation}

In the following, we will eliminate $\mu$ and $z$, while maintaining \eqref{eq:z-leq-m} and \eqref{eq:m-lambda-r-bounds} and without decreasing either greedy bound. Observe that 
\begin{equation}\label{eq:sigma-third-in-new-variables}
\begin{aligned}
    \sigmathird(V_\eta^1) 
    = \tfrac32 R_1(V_\eta^{1/3}) &+ 3(R_1(V_\single) + R_1(V_\double))\\
     &- \tfrac12 (R_0(V_\single) + R_0(V_\double)) 
    = \tfrac32 r + m + 0.159\;.
\end{aligned}
\end{equation}
We rescale the integrals to $t\in[0,1]$ by setting $\sigmaeta(V_\eta^1) t = s$ and $\sigmathird(V_\eta^1) t = s$, respectively. For the remainder of the proof, $\sigmaeta(V_\eta^1)$ and $\sigmathird(V_\eta^1)$, which only depend on $r$ and $m$, will not change, so it suffices to consider the scaled integrands. Their common numerator is
\[
    N(t):=1-R_0(V_\single)t,
\]
and their denominators are
\begin{align*}
 D_\eta(t)&\coloneq(1-t)2r
              -2\bigl(R_0(V_\single)-R_1(V_\single)\bigr)t,\\
 D_1(t)&\coloneq(1-t)\sigmathird(V_\eta^1)
              -2\bigl(R_0(V_\single)-R_1(V_\single)\bigr)t,\\
 D_2(t)&\coloneq (1-t)\sigmathird(V_\eta^1)
              -\tfrac32\bigl(R_0(V_\single)-R_1(V_\single)\bigr)t-\tfrac18R_0(V_\double).
\end{align*}
The integrands are
\[
    \min\left\{1,\frac{N}{D_\eta}\right\},
    \qquad
    \min\left\{1,\frac{N}{D_1},\frac{N}{D_2}\right\}.
\]
First, set $\mu$ to zero while keeping $r,m,\lambda$ and $z$ fixed. This corresponds to the following changes:
\[
\begin{aligned}
 R_0(V_\double)&\gets R_0(V_\double)+\mu,\\
 R_1(V_\double)&\gets R_1(V_\double)+\tfrac{\mu}{3},\\
 R_1(V_\eta^{1/3})&\gets R_1(V_\eta^{1/3})-\tfrac{\mu}{3}.
\end{aligned}
\]
Note that also $\sigmathird(V_\eta^1)$ remains unchanged by \eqref{eq:sigma-third-in-new-variables}. Therefore, the expressions $N,D_\eta,D_1$ remain unchanged, while $D_2$ decreases by $\mu/8$.
This however can only increase the integrand. Thus, neither greedy bound decreases.
Now using $\mu=0$, one can verify that the original problem variables are given by
\begin{equation}
\label{eq:reverse-transformation}
\begin{aligned}
 R_1(V_0^\eta)&=1-r,\\
 R_1(V_\eta^{1/3})
   &=r-\tfrac{841}{1500}+\lambda+\tfrac23(m-z),\\
 R_1(V_\single)
   &=\tfrac{159}{500}+2m-z+\lambda,\\
 R_1(V_\double)
   &=\tfrac{91}{375}-\tfrac83m+\tfrac53z-2\lambda,\\
 R_0(V_\single)
   &=\tfrac{159}{250}+4m-2z+3\lambda,\\
 R_0(V_\double)
   &=\tfrac{91}{125}-8m+4z-6\lambda.
\end{aligned}
\end{equation}

Next, increase $z$ to $m$ while keeping $r,m,\lambda$ fixed. Differentiating the numerator and denominators with the definitions from \eqref{eq:reverse-transformation} gives
\[
    \partial_zN=\partial_zD_\eta=\partial_zD_1=2t,
    \qquad
    \partial_zD_2=-\tfrac12+\tfrac32t.
\]
In particular, $N(t)>0$ for $t<1$ throughout this transformation.
Whenever $0<N\leq D_i$, differentiation yields
\begin{align*}
 \partial_z\frac{N}{D_i}
    &=\frac{2t(D_i-N)}{D_i^2}\geq0,
      &&i\in\{\eta,1\},\\
 \partial_z\frac{N}{D_2}
    &=\frac{\tfrac12(1+t)N+2t(D_2-N)}{D_2^2}\geq0.
\end{align*}
Hence, both greedy bounds have not decreased. At $z = m$, the relations in \eqref{eq:reverse-transformation} give \eqref{eq:variable-transformation-cvrp-inst}, with $(r, m, \lambda)$ in the prism by \eqref{eq:m-lambda-r-bounds}. Both Greedy bounds are at least $\rho$ contradicting \eqref{eq:three-dimensional-certificate}. \qed
\end{proof}

Using interval arithmetic, we certify that the maximum of $\min\{\widehat C_\eta,\widehat C_{1/3}\}$ over the prism
in \eqref{eq:three-dimensional-prism} is strictly less than $\rho=\alpha+1.659$.
The interval-arithmetic verification code is available on
\href{https://github.com/Weismantel/Average_Greedy_for_CVRP}{GitHub}.
Together with the compactness argument this completes the proof of \cref{thm:final_approx_ratio} and therefore also of \cref{thm:main_theorem}.

\begin{remark}
The bound of $\alpha + 1.659$ is almost tight for our analysis.
For an instance with
\begin{equation*}
\begin{aligned}
    R_1(V_0^\eta) &= 0.174, &R_1(V_\eta^{1/3}) &= 0.267, \\
    R_1(V_\single) &= 0.437, &R_1(V_\double) &= 0.122, \\
    R_0(V_\single) &= 0.875, &R_0(V_\double) &= 0.248,
\end{aligned}
\end{equation*}
the bounds $C_M, \widehat C_\eta, \widehat C_{1/3}$ are almost identical and the minimum is larger than $\alpha + 1.6588$.
\end{remark}

\printbibliography

\newpage
\appendix
% This is necessary to fix links into the appendix. Without this, the
% appendix sections get the same internal identifiers as the sections in the document.
\renewcommand{\theHsection}{appendix.\thesection}

\section{Proof of \cref{lem:delta_tank}}\label{app:proof-delta-itp}

\begin{proofof}{\cref{lem:delta_tank}}
First serve all clients with demand greater than $1/2$ by singleton tours with cost exactly $\sigmageneric(V_{1/2}^1)$. The rest of the algorithm proceeds in two steps. First, compute a TSP tour $Q = (s, v_1,\ldots, v_n, s)$ with $c(Q) \leq \alpha \cdot \Opt$. Afterward, split the TSP tour into segments, that will be connected to the depot to form feasible tours. To this end, let $\tau \in [0,1)$ be an offset chosen uniformly at random. Let $I \subseteq \{1,\ldots,n\}$ be the set of indices $i$ for which
\[\Big\lfloor \tau + \frac{1}{1-\delta}\sum_{j = 1}^{i-1} d(v_j)\Big\rfloor \neq \Big\lfloor \tau + \frac{1}{1-\delta}\sum_{j = 1}^{i} d(v_j)\Big\rfloor.\]
For any consecutive indices $i_1, i_2$ it holds  that $d(\{v_{i_1 + 1}, \ldots, v_{i_2 - 1}\}) \leq 1 - \delta$. Furthermore, $ \Pr[i \in I] \leq d(v_i)/(1-\delta)$. Now split the TSP tour at these indices. If $v_i$ for $i \in I$ fits into the previous tour segment, assign it to this segment. Otherwise, create a singleton segment. Each segment is then connected to the depot. The additional connection cost is charged to the client where the tour was split. A client $v_i$ with $d(v_i) \leq \delta$ always fits into the previous tour segment, therefore the expected connection cost for these clients is
    \[
    \sum_{v_i \in V_0^\delta} \Pr[i \in I] \cdot R_0(v_i)\leq  \sum_{v_i \in V_0^\delta} \frac{d(v_i)}{1-\delta} R_0(v_i) = \frac{1}{1-\delta}R_1(V_0^\delta)\;.
    \]
A client $v_i \in V_\delta^{1/2}$, might not always fit into the previous tour segment. However, $\Pr[v_i \text{ fits} \mid i \in I] \geq \frac{\delta}{d(v_i)}$. Therefore, the expected connection cost for $v_i \in V_\delta^{1/2}$ is
\begin{align*}
    \Pr[i \in I] \cdot &\bigl(\Pr[v_i \text{ fits } | i \in I] \cdot 1  + \Pr[v_i \text{ does not fit } | i \in I] \cdot 2\bigr) R_0(v_i)\\
    &\leq \frac{d(v_i)}{1-\delta} \left( \bigl(\tfrac{\delta}{d(v_i)}\bigr) \cdot 1 + \bigl(1 - \tfrac{\delta}{d(v_i)}\bigr) \cdot 2 \right) R_0(v_i) = \frac{2d(v_i)-\delta}{1-\delta}R_0(v_i).
\end{align*}
Summing over all clients in $V_\delta^{1/2}$ yields:
    \[
    \sum_{v \in V_\delta^1} \frac{2d(v)-\delta}{1-\delta} R_0(v) = \frac{2}{1-\delta}R_1(V_\delta^1) - \frac{\delta}{1-\delta}R_0(V_\delta^1).
    \]
Combining the TSP cost, the splitting cost and the singleton tour cost yields the claimed bound. Note that the tour splitting can be solved optimally by dynamic programming. Therefore, the cost bound also holds deterministically.
\end{proofof}

\section{Reduction to Small Tours}\label{app:proof-any-tour-small-r1}
\begin{lemma}\label{lem:any-tour-small-r1}
    Let $\epsilon, \eta>0$ and $\rho \geq 1$. If there is a $\rho$-approximation algorithm for CVRP with the additional requirement that
    $R_1(T) \leq \epsilon\cdot R_1(V_\eta^1)$ for any tour $T$ on $V_\eta^1$, then there is a $\rho$-approximation for CVRP.
\end{lemma}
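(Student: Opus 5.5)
The plan is the standard copying reduction. Given an instance $I=(V,s,c,d)$, build $I^{(N)}$ from $N$ disjoint copies of the client set, all attached to the same depot. Costs within one copy are taken from $c$. Costs between a client of copy $i$ and a client of copy $j\neq i$ are set to the cost of going through the depot, namely $c(u,s)+c(s,w)$. This is still a metric, since it is the shortest-path metric of $N$ copies of the original graph glued at $s$. Demands are copied.

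Choose $N\coloneq\lceil 1/\epsilon\rceil$. Every tour $T$ on $(V^{(N)})_\eta^1$ has at most $1/\eta$ clients, so its radial demand-weighted cost is at most
\[R_1(T)\leq \sum_{v\in T}\tfrac12 R_0(v)\leq \tfrac{1}{2\eta}\max_{v}R_0(v).\]
On the other hand, $R_1\bigl((V^{(N)})_\eta^1\bigr)=N\cdot R_1(V_\eta^1)$. This gives the required ratio only if every large client's $R_0$ is comparable to $R_1(V_\eta^1)$, which need not hold: a single far large client may carry almost all of the radial mass. So $N$ must depend on the instance, but it must stay bounded by a constant so that the algorithm runs in polynomial time (the paper states that only a constant number of copies is used).

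To handle this, I would take $N\coloneq\lceil 1/(\epsilon\eta)\rceil$, a constant. Then for any tour $T$ on $(V^{(N)})_\eta^1$,
\[R_1(T)\leq \tfrac{1}{\eta}\max_{v\in T}R_1(v)\leq \tfrac1\eta R_1(V_\eta^1)=\tfrac{1}{N\eta}R_1\bigl((V^{(N)})_\eta^1\bigr)\leq \epsilon\, R_1\bigl((V^{(N)})_\eta^1\bigr).\]
The first inequality uses that $T$ has at most $1/\eta$ clients, each in $V_\eta^1$. The second uses that each $R_1(v)$ with $v$ large is bounded by the total $R_1(V_\eta^1)$ of any single copy. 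So the hypothesis of the lemma holds in $I^{(N)}$.

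Next I relate the optima. First, $\Opt(I^{(N)})\leq N\cdot\Opt(I)$, by taking the union of copies of an optimal solution. Second, any solution $\mathcal S$ of $I^{(N)}$ yields solutions of $I$ whose average cost is at most $c(\mathcal S)/N$. To see this, split each tour of $\mathcal S$ at the depot visits it implicitly makes when it switches copies. With the glued metric, a tour visiting several copies costs at least the sum of its per-copy pieces, each closed through $s$. Project each piece to $I$ by forgetting the copy label. This gives, for each copy $i$, a feasible solution $\mathcal S_i$ of $I$: demands within a piece are at most those of the original tour, and every client of copy $i$ is covered exactly once. Moreover $\sum_i c(\mathcal S_i)\leq c(\mathcal S)$.

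The reduction algorithm therefore runs the assumed algorithm on $I^{(N)}$, forms the $N$ projected solutions, and returns the cheapest one. Its cost is at most
\[c(\mathcal S)/N\leq \rho\,\Opt(I^{(N)})/N\leq \rho\,\Opt(I).\]
The instance size grows by the constant factor $N$, so the running time stays polynomial.

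I expect the main obstacle to be the triangle-inequality step when splitting tours that mix copies. The cross-copy cost must be defined through the depot so that splitting never increases cost. I also need to check that the glued costs are a metric, including in the asymmetric case, where $c(u,s)+c(s,w)$ is used in each direction.
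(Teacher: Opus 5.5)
Your proposal is correct and follows the paper's proof. Both use $\lceil 1/(\epsilon\eta)\rceil$ copies glued at the depot with cross-copy costs routed through $s$, and both split mixed tours into per-copy solutions of total cost at most $c(\mathcal S)$. The verification of $R_1(T)\le\epsilon R_1\bigl((V^{(N)})_\eta^1\bigr)$ differs only cosmetically. The paper uses $d(T)\le 1$ to get $R_1(T)\le \max_v R_0(v)$ and then $\eta\max_v R_0(v)\le R_1(V_\eta^1)$. You use $|T|<1/\eta$ and $R_1(v)\le R_1(V_\eta^1)$. Your intermediate remark that $N$ ``must depend on the instance'' is unnecessary, as your own constant choice of $N$ shows.
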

\begin{proof}
    Given a CVRP instance $I=(V,s,c,d)$, create an instance $I'$ by copying the original instance $C=\ceil{\frac{1}{\epsilon\cdot\eta}}$ times and identifying the depot vertices. The metric between clients of different copies is given by the metric closure, i.e.\ by taking a detour through the depot. Given a solution to $I$, one can obtain a solution to $I'$ with at most $C$ times the cost by copying each tour $C$ times. Conversely, a solution to $I'$ can be turned into $C$ solutions to $I$ with in total at most the same cost: By definition of the metric, one can subdivide each tour whenever it crosses between different copies of the original instance without increasing the cost. In particular, it suffices to compute a $\rho$-approximate solution to $I'$.

    It remains to check that $I'$ satisfies the condition. If $V_\eta^1 = \emptyset$, this is clear.
    Otherwise, let $M = \max_{v \in V_\eta^1} R_0(v)$ and $T$ be an arbitrary tour in $(V_\eta^1)^C$. Then
    \[R_1(T) \leq \sum_{v \in T} d(v) R_0(v) \leq M \leq \epsilon C \eta M \leq \epsilon C R_1(V_\eta^1) =\epsilon R_1\bigl((V_\eta^1)^C\bigr).\]
\end{proof}

\section{Proof of \cref{lemma:continuity_monotonicity_check}}\label{sec:proof_continuity_monotonicity_check}
\begin{proofof}{\cref{lemma:continuity_monotonicity_check}}
    Recall the definition
    \[\bar x(s) \coloneq \left(\frac{R_0(V_\single)}{\sigmageneric(V_\eta^1)}\cdot s, \frac{R_1(V_\single)}{\sigmageneric(V_\eta^1)}\cdot s, s\right) \in \bR^3_{\geq0}\]
    and the continuity condition
    \[\Phi(\bar x(s) + (\lambda,\mu,0)) \leq \Phi(\bar x(s)) + O\left(\frac{\abs{\lambda}+\abs{\mu}}{\sigmageneric(V_\eta^1)}\right)\]
    from \cref{def:cost-density-function}. We will show this for the functions $\Phi=\min\{\max\{1/2, \Phi^{(1)}_\eta\}, 1\}$ and $\Phi = \min\{\max\{1/2, \Phi^{(1)}_{1/3}\}, \max\{4/7, \Phi^{(2)}_{1/3}\}, 1\}$. Clearly, the continuity is preserved by taking minima and maxima, so it suffices to show it for the fractions on the domain where they attain the minimum.

    Let $\delta \in \{\eta, 1/3\}, s \in [0, \sigmageneric(V_\eta^1)]$ and $t \coloneq \frac{s}{\sigmageneric(V_\eta^1)} \in [0,1]$. Additionally, let $\lambda,\mu \in \bR$ such that $\bar x(s) + (\lambda,\mu,0)$ is inside the domain. We have
    \begin{equation}\label{eq:condition-lemma-cost-density-1}
        \Phi_\delta^{(1)}(\bar x(s))
        = \frac{\Opt - t\cdot R_0(V_\single)}{(1-t)\sigmageneric(V_\eta^1) - 2t\cdot (R_0(V_\single) - R_1(V_\single))}
    \end{equation}
    and similarly, with $\delta=\frac13$,
    \begin{equation}\label{eq:condition-lemma-cost-density-2}
        \Phi_\delta^{(2)}(\bar x(s))
        = \frac{\Opt - t\cdot R_0(V_\single)}{(1-t)\sigmageneric(V_\eta^1) - \frac32 t\cdot (R_0(V_\single) - R_1(V_\single)) - \frac18 R_0(V_\double)}\;.
    \end{equation}
    Since $R_1(V_\single) \leq \frac12 R_0(V_\single)$, both terms are at least
    \begin{align*}
        \frac{\Opt - t\cdot R_0(V_\single)}{(1-t)\sigmageneric(V_\eta^1) - \frac{3}{4}t\cdot R_0(V_\single)}\;.
    \end{align*}
    If $\Phi_\delta^{(i)}(\bar x(s)) > 1$ for $i \in \{1,2\}$, then the minimum is attained by 1 and the continuity requirement is trivially satisfied. Otherwise, we have
    \begin{equation}\label{eq:tbound}\Opt - \frac{t}{4} \cdot R_0(V_\single) \leq (1-t) \sigmageneric(V_\eta^1)\end{equation}
    and therefore
    \[t \leq \frac{\sigmageneric(V_\eta^1) - \Opt}{\sigmageneric(V_\eta^1) - \frac14 R_0(V_\single)} \leq \frac{\sigmageneric(V_\eta^1) - \Opt}{\sigmageneric(V_\eta^1) - \frac14 \Opt} \leq \frac{\Opt}{\frac74 \Opt}=\frac47\;,\]
    where we used $R_0(V_\single) \leq \Opt$ and $\sigmageneric(V_\eta^1) \leq 2\Opt$. Note that $\Opt \leq \sigmageneric(V_\eta^1)$ by \eqref{eq:tbound} since otherwise the minimum is always attained by 1, so none of the fractions above are negative.
    
    Let $\Phi_\delta^{(i)}(\bar x(s))=\frac{N}{D}$, where $N$ and $D$ are the numerator and denominator of \eqref{eq:condition-lemma-cost-density-1} or \eqref{eq:condition-lemma-cost-density-2}. Then $t \leq \frac47$ implies  $N \geq \frac37 \Opt \geq \frac3{14} \sigmageneric(V_\eta^1)$, so also $D \geq \frac3{14} \sigmageneric(V_\eta^1)$.

    We can assume $\abs{\lambda} + \abs{\mu} \leq (1/24)\cdot \sigmageneric(V_\eta^1)$: Otherwise, the continuity condition is trivial because $\Phi \in [0,1]$ by definition. For $\Phi_\delta^{(1)}$, adding $(\lambda, \mu, 0)$ to the argument decreases the numerator by $\lambda$ and the denominator by $2\lambda - 2\mu$. For $\Phi_{1/3}^{(2)}$, the denominator is instead decreased by $\frac32 \lambda - \frac32 \mu$. In both cases, the resulting function value is at most
    \begin{align*}
        \frac{N + \abs{\lambda}}{D - 2\abs{\lambda} - 2\abs{\mu}}
        &= \frac{N}{D} + \frac{\abs{\lambda}}{D - 2\abs{\lambda} - 2\abs{\mu}} + \frac{2N(\abs{\lambda} + \abs{\mu})}{D(D-2\abs{\lambda}-2\abs{\mu})} \\
        &\leq \frac{N}{D} + \frac{\abs{\lambda}}{D - 2\abs{\lambda} - 2\abs{\mu}} + \frac{2(\abs{\lambda} + \abs{\mu})}{D-2\abs{\lambda}-2\abs{\mu}} \\
        &\leq \frac{N}{D} + \frac{6\abs{\lambda} + 4\abs{\mu}}{D},
    \end{align*}
    where the last inequality follows from $2\abs{\lambda} + 2\abs{\mu} \leq \frac{1}{12} \sigmageneric(V_\eta^1) \leq \frac{D}{2}$. Again using the lower bound on $D$, the second term is in $O(\frac{\abs{\lambda} + \abs{\mu}}{\sigmageneric(V_\eta^1)})$. This proves the continuity condition \eqref{eq:continuity_condition}.

    For the monotonicity \eqref{eq:monotonicity_condition}, we observe that by increasing $t$ to some value $t' > t$, we subtract at most $(t'-t) R_0(V_\single)\leq (t'-t)\Opt$ from the numerator and at least $(t'-t)\cdot \sigmageneric(V_\eta^1) $ from the denominator. Since $\Phi_\delta^{(i)}(\bar x(s))$ is always at least $\frac{\Opt}{\sigmageneric(V_\eta^1)}$, this cannot decrease the fraction.
\end{proofof}

\end{document}